\newcommand\vpspace{\ensuremath{\mathsf{VPSPACE}}}
\newcommand\vpspacezero{\ensuremath{\mathsf{VPSPACE}^0}}
\newcommand\cc{\ensuremath{\mathbb{C}}}
\newcommand\rr{\ensuremath{\mathbb{R}}}
\newcommand\qq{\ensuremath{\mathbb{Q}}}
\newcommand\zz{\ensuremath{\mathbb{Z}}}
\newcommand\parc{\ensuremath{\mathsf{PAR}_{\cc}}}
\newcommand\parr{\ensuremath{\mathsf{PAR}_{\rr}}}
\newcommand\pc{\ensuremath{\mathsf{P}_{\cc}}}
\newcommand\pr{\ensuremath{\mathsf{P}_{\rr}}}
\newcommand\parczero{\ensuremath{\mathsf{PAR}_{\cc}^0}}
\newcommand\pczero{\ensuremath{\mathsf{P}_{\cc}^0}}
\newcommand\vp{\ensuremath{\mathsf{VP}}}
\newcommand\vpzeronb{\ensuremath{\vp_{\mathsf{nb}}^0}}
\newcommand\vnp{\ensuremath{\mathsf{VNP}}}
\newcommand\pspace{\ensuremath{\mathsf{PSPACE}}}
\newcommand\ssc{satisfiable sign condition}
\newcommand\unif{\ensuremath{\mathsf{Uniform}}\ }
\newcommand\p{\ensuremath{\mathsf{P}}}
\newcommand\vpnb{\ensuremath{\mathsf{VP}_{\mathsf{nb}}}}
\newcommand\vpnbzero{\ensuremath{\mathsf{VP}^0_{\mathsf{nb}}}}
\newcommand\np{\ensuremath{\mathsf{NP}}}
\newcommand\vparzero{\ensuremath{\mathsf{VPAR}^0}}
\newcommand\poly{\ensuremath{\mathsf{poly}}}
\newcommand\pu{\ensuremath{\mathsf{P}\mbox{-}\mathsf{uniform\ }}}
\newcommand\nc{\ensuremath{\mathsf{NC}}}
\title{VPSPACE and a transfer theorem over the complex field}
\institute{LIP\thanks{UMR 5668 ENS Lyon, CNRS, UCBL, INRIA. Research report
    RR2007-27.},
\'Ecole Normale Sup\'erieure de Lyon.\\
{\tt [Pascal.Koiran,Sylvain.Perifel]@ens-lyon.fr}\\ ~\\
June 2007}
\author{Pascal Koiran\and Sylvain Perifel}
\date{\today}
\begin{document}

\maketitle

\noindent\textbf{Abstract.}
  We extend the transfer theorem of~\cite{KP2006} to the complex field. That
  is, we investigate the links between the class \vpspace\ of families of
  polynomials and the Blum-Shub-Smale model of computation over \cc. Roughly
  speaking, a family of polynomials is in \vpspace\ if its coefficients can be
  computed in polynomial space.  Our main result is that if (uniform,
  constant-free) \vpspace\ families can be evaluated efficiently then the
  class \parc\ of decision problems that can be solved in parallel polynomial
  time over the complex field collapses to \pc.  As a result, one must first
  be able to show that there are \vpspace\ families which are hard to evaluate
  in order to separate \pc\ from $\np_{\cc}$, or even from \parc.

\noindent{\em Keywords:}  computational complexity,
algebraic complexity, Blum-Shub-Smale model, Valiant's model.

\section{Introduction}

In algebraic complexity theory, two main categories of problems are studied:
evaluation and decision problems. The evaluation of the permanent of
a matrix is a typical example of an evaluation problem, and it is well known
that the permanent family is complete for the class \vnp\ of ``easily
definable'' polynomial families~\cite{valiant1979}. Deciding whether a system
of polynomial equations has a solution over \cc\ is a typical example of a
decision problem. This problem is \np-complete in the Blum-Shub-Smale model of
computation over the complex field~\cite{BCSS,BSS1989}.

The main purpose of this paper is to provide a transfer theorem connecting the
complexity of evaluation and decision problems. This paper is therefore in the
same spirit as~\cite{KoiPe06} and~\cite{KP2006} (see
also~\cite{burgisser2001}). In the present paper we work with the class of
polynomial families \vpspace\ introduced in~\cite{KP2006}. Roughly speaking, a
family of polynomials (of possibly exponential degree) is in \vpspace\ if its
coefficients can be evaluated in polynomial space. For instance, it is shown
in~\cite{KP2006} that resultants of systems of multivariate polynomial
equations form a \vpspace\ family. The main result in~\cite{KP2006} was that
if (uniform, constant-free) \vpspace\ families can be evaluated efficiently
then the class \parr\ of decision problems that can be solved in parallel
polynomial time over the real numbers collapses to \pr.

Here we extend this result to the complex field \cc. At first glance the
result seems easier because the order $\leq$ over the reals does not have to
be taken into account. The result of~\cite{KP2006} indeed makes use of a
clever combinatorial lemma of~\cite{grigoriev1999} on the existence of a
vector orthogonal to roughly half a collection of vectors. More precisely, it
relies on the constructive version of this lemma~\cite{CJKPT2006}. On the
complex field, we do not need this construction.

But the lack of an order over \cc\ makes another part of the proof more
difficult. Indeed, over \rr\ testing whether a point belongs to a real variety
is done by testing whether the sum of the squares of the polynomials is zero,
a trick that cannot be used over the complex field. Hence one of the main
technical developments of this paper is to explain how to decide with a small
number of tests whether a point is in the complex variety defined by an
exponential number of polynomials. This enables us to follow the
nonconstructive proof of~\cite{koiran2000} for our transfer theorem.

Therefore, the main result of the present paper is that if (uniform,
constant-free) \vpspace\ families can be evaluated efficiently then the
class \parc\ of decision problems that can be solved in parallel polynomial
time over the complex field collapses to \pc\ (this is precisely stated in
Theorem~\ref{th_transfer}). The class \parc\ plays roughly the same role in
the theory of computation over the complex field as \pspace\ in discrete
complexity theory. In particular, it contains $\np_{\cc}$~\cite{BCSS} (but the
proof of this inclusion is much more involved than in the discrete case). It
follows from our main result that in order to separate \pc\ from $\np_{\cc}$,
or even from \parc, one must first be able to show that there are \vpspace\
families which are hard to evaluate. This seems to be a very challenging lower
bound problem, but it is still presumably easier than showing that the
permanent is hard to evaluate.

{\bf \noindent Organization of the paper}.  We first recall in
Section~\ref{sec_notions} some usual notions and notations concerning
algebraic complexity (Valiant's model, the Blum-Shub-Smale model) and
quantifier elimination. The class \vpspace\ is defined in
Section~\ref{sec_vpspace} and some properties proved in~\cite{KP2006} are
given. Section~\ref{membership} explains how to decide with a polynomial number
of \vpspace\ tests whether a point belongs to a variety. The main difficulty
here is that the variety is given as a union of an exponential number of
varieties, each defined by an exponential number of polynomials. Finally,
Section~\ref{sec_proof} is devoted to the proof of the transfer theorem. Sign
conditions are the main tool in this section. We show that \parc\ problems are
decided in polynomial time if we allow \unif\vpspacezero\ tests. The transfer
theorem follows as a corollary.

\section{Notations and Preliminaries}\label{sec_notions}

\subsection{The Blum-Shub-Smale Model} \label{BSS}

In contrast with boolean complexity, algebraic complexity deals with other
structures than $\{0,1\}$. In this paper we will focus on the complex field
$(\cc,+,-,\times,=)$.
Although the original definitions of
Blum, Shub and Smale~\cite{BSS1989,BCSS} are in terms of 
uniform machines, 
we will follow~\cite{poizat1995} by using families of algebraic circuits to
recognize languages over $\cc$, that is, subsets of
$\cc^{\infty}=\bigcup_{n\geq 0}\cc^n$.

An algebraic circuit
is a directed
acyclic graph whose vertices, called gates, have indegree 0, 1 or 2.  An input
gate is a vertex of indegree 0.  An output gate is a gate of outdegree 0. We
assume that there is only one such gate in the circuit. Gates of indegree 2
are labelled by a symbol from the set $\{+,-,\times\}$. Gates of indegree 1,
called test gates, are labelled ``$= 0$?''. The size of a circuit $C$, in
symbols $|C|$, is the number of vertices of the graph.

A circuit with $n$ input gates computes a function from $\cc^n$ to $\cc$.  On
input $\bar u \in \cc^n$ the value returned by the circuit is by definition
equal to the value of its output gate.  The value of a gate is defined in the
usual way. Namely, the value of input gate number $i$ is equal to the $i$-th
input $u_i$.  The value of other gates is then defined recursively: it is the
sum of the values of its entries for a $+$-gate, their difference for a
$-$-gate, their product for a $\times$-gate.  The value taken by a test gate
is 0 if the value of its entry is $\neq 0$ and 1 otherwise.  Since we are
interested in decision problems, we assume that the output is a test gate: the
value returned by the circuit is therefore 0 or 1.

The class \pc\ is the set of languages $L \subseteq \cc^{\infty}$ such that
there exists a tuple $\bar a \in \cc^p$ and a \p-uniform family of
polynomial-size
circuits $(C_n)$ satisfying the following condition: $C_n$ has exactly
$n+p$ inputs, and for any $\bar x \in \cc^n$, $\bar x \in L \Leftrightarrow
C_n(\bar x,\bar a)=1$. 
The \p-uniformity condition means that $C_n$ can be built in time polynomial
in $n$ by an ordinary (discrete) Turing machine.
Note that $\bar a$ plays the role of the machine
constants of~\cite{BCSS,BSS1989}.

As in~\cite{ChaKoi99}, we define the class \parc\ 
as the set of languages over \cc\ recognized by a 
\pspace-uniform (or equivalently \p-uniform)
family of algebraic circuits of polynomial depth (and possibly exponential
size), with constants $\bar a$ as for \pc. 
Note at last that we could also define similar classes without constants
$\bar a$. We will use the superscript 0 to denote these constant-free classes,
for instance $\pc^0$ and $\parczero$.

We end this section with a theorem on the first-order theory of the complex
numbers: quantifiers can
be eliminated without much increase of the coefficients and degree of the
polynomials. We give a weak version of the result of~\cite{FGM1990}: in
particular, we do not need efficient elimination algorithms. Note that the
only allowed constants in our formulae are 0 and 1 (in particular, only
integer coefficients can appear). For notational consistency with the
remainding of the paper, we denote by $2^s$, $2^d$ and $2^{2^M}$ the number of
polynomials, their degree and the absolute value of their coefficients
respectively. This will simplify the calculations and emphasize that $s$, $d$
and $M$ will be polynomial. Note furthermore that the polynomial $p(n,s,d)$ in
the theorem is independent of the formula $\phi$.
\begin{theorem}\label{th_elimination}
  Let $\phi$ be a first-order formula over 
  $(\cc,0,1,+,-,\times,=)$ of the form $\forall\bar x\psi(\bar
  x)$, where $\bar x$ is a tuple of $n$ variables and $\psi$ a
  quantifier-free formula where $2^s$ polynomials occur. Suppose that their
  degrees are bounded by $2^d$ and their coefficients by $2^{2^M}$ in
  absolute value.

  There exists a polynomial $p(n,s,d)$, independent of $\phi$,
  such that the formula $\phi$ is equivalent to a quantifier-free formula
  $\psi$ in which all polynomials have degree less than $D(n,s,d) =
  2^{p(n,s,d)}$, and their coefficients are integers strictly bounded in
  absolute value by $2^{2^MD(n,s,d)}$.
\end{theorem}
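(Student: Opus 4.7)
The strategy is to invoke an effective quantifier elimination procedure for the algebraically closed field \cc\ and track the parameters carefully; since only the existence of the equivalent quantifier-free formula is required (not an efficient algorithm), the weak version of the bounds from~\cite{FGM1990} cited in the statement is enough, and we just need to verify that the parameter dependencies match.

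The natural first step is to rewrite $\phi$ as $\neg \exists \bar{x}\, \neg \psi(\bar{x})$ and put $\neg \psi$ in disjunctive normal form. Since $\exists$ commutes with disjunction, it suffices to eliminate $\exists \bar{x}$ from each conjunction of equalities $P_i(\bar{x},\bar{y})=0$ and inequalities $Q_j(\bar{x},\bar{y}) \neq 0$, where $\bar{y}$ denotes the remaining free variables. The classical Rabinowitsch trick ($zQ_j - 1 = 0$) replaces the inequalities by a single extra equation in one extra variable, multiplying the total number of polynomials and their degree by only a polynomial factor in $n$, $s$, $d$.

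Next I would apply an effective projective elimination bound to each pure system of equations. The standard statement is that, from $N$ polynomials of degree at most $D$ in $n$ quantified variables with integer coefficients of bit-length at most $L$, the image of the projection is described by polynomials in $\bar{y}$ whose number and degree are bounded by $(ND)^{O(n)}$ and whose coefficient bit-length is bounded by $L \cdot (ND)^{O(n)}$. Reassembling the conjuncts by Boolean connectives preserves this form of bound.

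Substituting our parameters $N = 2^s$, $D = 2^d$ and $L = 2^M$ gives a degree bound of the form $2^{O(n(s+d))} = 2^{p(n,s,d)}$ and a coefficient magnitude of the form $2^{2^M \cdot 2^{p(n,s,d)}} = 2^{2^M D(n,s,d)}$, exactly as claimed. The main subtlety, and really the only point where one must look inside the elimination theorem rather than use it as a black box, is the requirement that the exponent $p(n,s,d)$ be independent of $M$ (hence of $\phi$). This reduces to checking that in the chosen procedure the coefficient blow-up is multiplicative in the degree bound rather than compositional; this is a standard feature of the classical elimination estimates and is precisely what lets us factor the coefficient bound as $2^{2^M \cdot D(n,s,d)}$ with $D(n,s,d)$ depending only on $n,s,d$.
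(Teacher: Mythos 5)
The paper does not give a proof of this theorem: it is explicitly introduced as a ``weak version of the result of \cite{FGM1990}'' and then used as a black box, so there is no internal argument to compare against. Your plan reconstructs the standard route to such a bound (negate, pass to an existential block, take a disjunctive normal form, apply Rabinowitsch, then invoke an effective elimination bound over an algebraically closed field and substitute parameters), which is the right family of techniques and is in the same spirit as what \cite{FGM1990} actually does.

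Two places deserve more care, both around the Rabinowitsch step. First, the claim that it multiplies ``the total number of polynomials and their degree by only a polynomial factor in $n,s,d$'' is misstated: the degree of $z\prod_j Q_j - 1$ can be as large as $1 + 2^s\cdot 2^d$, so the degree is multiplied by an exponential factor $2^s$; what is polynomially controlled is the exponent $d\mapsto s+d+1$, which is of course what matters for keeping $p(n,s,d)$ polynomial, but the sentence as written would mislead a reader checking the bookkeeping. Second, and more substantively, you never track the coefficient growth of $\prod_j Q_j$: multiplying up to $2^s$ polynomials produces integer coefficients of bit-length roughly $2^s(2^M + \log T)$, where $T$ is the number of monomials in each factor, and $\log T$ grows with the total number of variables (quantified \emph{and} free). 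The theorem statement itself is silent on the number of free variables of $\psi$, and in the paper's applications it is always polynomial in $n$, so the result should really be read with that implicit hypothesis; your proof inherits the same hidden assumption but ought to make it explicit, since otherwise the $\log T$ term is not absorbed by any $p(n,s,d)$. Granting a polynomial bound on the number of free variables, the coefficient bit-length after Rabinowitsch stays of the form $2^{M + \mathrm{poly}(n,s,d)}$, and your final substitution then gives exactly the claimed bounds $D(n,s,d)=2^{p(n,s,d)}$ and $2^{2^M D(n,s,d)}$.
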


\subsection{Valiant's Model}

In Valiant's model, one computes polynomials instead of recognizing languages.
We thus use arithmetic circuits instead of algebraic circuits. A book-length
treatment of this topic can be found in~\cite{burgisser2000}.

An arithmetic circuit is the same as an algebraic circuit but test gates are
not allowed. That is to say we have indeterminates $x_1,\dots,x_{u(n)}$ as
input together with arbitrary constants of \cc; there are $+$, $-$ and
$\times$-gates, and we therefore compute multivariate polynomials.

The polynomial computed by an arithmetic circuit is defined in the usual way
by the polynomial computed by its output gate. Thus a family $(C_n)$ of
arithmetic circuits computes a family $(f_n)$ of polynomials,
$f_n\in\cc[x_1,\dots,x_{u(n)}]$. The class \vpnb\ defined in \cite{malod2003}
is the set of families $(f_n)$ of polynomials computed by a family $(C_n)$ of
polynomial-size arithmetic circuits, i.e., $C_n$ computes $f_n$ and there
exists a polynomial $p(n)$ such that $|C_n|\leq p(n)$ for all $n$. We will
assume without loss of generality that the number $u(n)$ of variables is
bounded by a polynomial function of $n$. The subscript $\mathsf{nb}$ indicates
that there is no bound on the degree of the polynomial, in contrast with the
original class \vp\ of Valiant where a polynomial bound on the degree of the
polynomial computed by the circuit is required. Note that these definitions
are nonuniform.  The class \unif\vpnb\ is obtained by adding a condition of
polynomial-time uniformity on the circuit family, as in Section~\ref{BSS}.

We can also forbid constants from our arithmetic circuits in unbounded-degree
classes, and define constant-free classes. The only constant allowed is 1 (in
order to allow the computation of constant polynomials). As for classes of
decision problems, we will use the superscript 0 to indicate the absence of
constant: for instance, we will write \vpnbzero\ (for bounded-degree classes,
we are to be more careful: the ``formal degree'' of the circuits comes into
play, see~\cite{malod2003,MP2006}).

\section{The Class VPSPACE}\label{sec_vpspace}

The class \vpspace\ was introduced in \cite{KP2006}. Some of its properties
are given there and a natural example of a \vpspace\ family coming from
algebraic geometry, namely the resultant of a system of polynomial equations,
is provided. In this section, after the definition we give some properties
without proof and refer to~\cite{KP2006} for further details.

\subsection{Definition}

We fix an arbitrary field $K$. The definition of \vpspace\ will be stated in
terms of \emph{coefficient function}. A monomial $x_1^{\alpha_1}\cdots
x_n^{\alpha_n}$ is encoded in binary by $\alpha = (\alpha_1,\dots,\alpha_n)$
and will be written $\bar x^{\alpha}$.

\begin{definition}
  Let $(f_n)$ be a family of multivariate polynomials with integer
  coefficients. The coefficient function of $(f_n)$ is the function $a$ whose
  value on input $(n,\alpha,i)$ is the $i$-th bit $a(n,\alpha,i)$ of the
  coefficient of the monomial $\bar x^\alpha$ in $f_n$. Furthermore,
  $a(n,\alpha,0)$ is the sign of the coefficient of the monomial $\bar
  x^{\alpha}$. Thus $f_n$ can be written as
  $$f_n(\bar x) = \sum_{\alpha}\Bigl((-1)^{a(n,\alpha,0)}\sum_{i\geq 1}
  a(n,\alpha,i)2^{i-1}\bar x^{\alpha}\Bigr).$$
\end{definition}

The coefficient function is a function $a:\{0,1\}^*\rightarrow\{0,1\}$ and can
therefore be viewed as a language. This allows us to speak of the complexity
of the coefficient function. Note that if $K$ is of characteristic $p>0$, then
the coefficients of our polynomials will be integers modulo $p$ (hence with a
constant number of bits). In this paper, we will focus only on the field \cc\
(which is of characteristic 0).

\begin{definition}
  The class \unif\vpspacezero\ is the set of all families $(f_n)$ of
  multivariate polynomials $f_n\in K[x_1,\dots,x_{u(n)}]$ satisfying the
  following requirements:
  \begin{enumerate}
  \item the number $u(n)$ of variables is polynomially bounded;
  \item the polynomials $f_n$ have integer coefficients;
  \item the size of the coefficients of $f_n$ is bounded by $2^{p(n)}$ for
    some polynomial $p$;
  \item the degree of $f_n$ is bounded by $2^{p(n)}$ for some polynomial $p$;
  \item the coefficient function of $(f_n)$ is in \pspace.
  \end{enumerate}
\end{definition}

We have chosen to present only \unif\vpspacezero, a uniform class without
constants, because this is the main object of study in this paper. In keeping
with the tradition set by Valiant, however, the class \vpspace\ is nonuniform
and allows for arbitrary constants. See~\cite{KP2006} for a precise definition.

\subsection{An Alternative Characterization and Some Properties}

Let \unif\vparzero\ be the class of families of polynomials computed by a
\pspace-uniform family of constant-free arithmetic circuits of polynomial
depth (and possibly exponential size). This in fact characterizes
\unif\vpspacezero. The proof is given in~\cite{KP2006}.

\begin{proposition}\label{prop_alternate}
The two classes $\unif\vpspacezero$ and $\unif\vparzero$ are equal.
\end{proposition}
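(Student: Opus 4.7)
I would prove the two inclusions separately.

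\emph{The inclusion $\unif\vpspacezero \subseteq \unif\vparzero$.} Given a family $(f_n)$ in $\unif\vpspacezero$ with coefficient function $a$ in \pspace, let $p(n)$ bound both the logarithm of the degree and the bit size of the coefficients of $f_n$. I would build, in \pspace-uniform fashion, a brute-force circuit $C_n$ of polynomial depth and exponential size that explicitly sums $c_{n,\alpha}\bar x^\alpha$ over all exponents $\alpha\in\{0,\ldots,2^{p(n)}\}^{u(n)}$. Each power $x_i^{\alpha_i}$ is produced by $O(p(n))$ iterated squarings, and $\bar x^\alpha$ is the product of the $u(n)$ such powers. For each fixed $(n,\alpha)$, the $O(p(n))$ bits of $c_{n,\alpha}$ are extracted from the \pspace\ coefficient function by the uniformity machine, hardwired as $0/1$ constants (with $0$ coded as $1-1$), weighted by the powers $2^{i-1}$ (again built by iterated squaring), and summed in logarithmic depth; the sign bit $a(n,\alpha,0)$ is handled by a conditional subtraction. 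A balanced addition tree of depth $O(u(n)p(n))$ finally combines all $2^{O(u(n)p(n))}$ terms. The resulting $C_n$ has polynomial depth, exponential size, and \pspace-uniformity, witnessing $(f_n)\in\unif\vparzero$.

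\emph{The inclusion $\unif\vparzero\subseteq\unif\vpspacezero$.} Given a \pspace-uniform family $(C_n)$ of constant-free arithmetic circuits of polynomial depth $d(n)$ computing $(f_n)$, I would give a \pspace\ algorithm for the coefficient function. The algorithm is a recursive traversal of $C_n$ that at each gate $g$ and multi-exponent $\alpha$ returns the coefficient of $\bar x^\alpha$ in the polynomial carried by $g$: at a leaf it returns $1$ or $0$ depending on whether $g$ matches the requested input variable or constant; at a $+$- or $-$-gate it adds or subtracts the two recursive values; at a $\times$-gate with children $g_1,g_2$ it enumerates $\beta\le\alpha$ through a \pspace\ counter and accumulates the convolution $\sum_\beta\mathrm{coeff}(g_1,\beta)\cdot\mathrm{coeff}(g_2,\alpha-\beta)$. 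The recursion depth equals $d(n)$, each recursive frame stores only a current multi-index and a partial sum of polynomially many bits, and \pspace-uniformity of $C_n$ lets the machine locate the needed subgates on the fly.

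The main obstacle is controlling the bit length of the intermediate coefficients at interior gates: constant-free polynomial-depth circuits can manufacture doubly-exponential integer constants via iterated squaring, so a naive inductive bound falls outside \pspace. The proof therefore needs a preliminary normal form, turning any $\unif\vparzero$-circuit into an equivalent one whose intermediate gates carry polynomials with singly-exponential integer coefficients (at the cost of at most a polynomial blow-up in depth), so that the recursive traversal above runs in polynomial space. This normal-form step, together with the careful bookkeeping of signs and carries in the coefficient function, is the delicate point of the equivalence.
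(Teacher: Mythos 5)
The decomposition into two inclusions is the right one, and for $\unif\vpspacezero\subseteq\unif\vparzero$ your brute-force evaluation circuit is essentially correct, but you have misread the size bound in the definition of \vpspace. Item 3 says the \emph{size} (i.e.\ bit length) of the coefficients of $f_n$ is bounded by $2^{p(n)}$, not by $p(n)$; a $\unif\vpspacezero$ coefficient may therefore be as large as $2^{2^{p(n)}}$ in magnitude (the paper says so explicitly in the proof of Corollary~\ref{cor_union}). Your per-monomial gadget therefore expands into $2^{p(n)}$ weighted bits, not $O(p(n))$, so its addition tree has polynomial depth rather than logarithmic; this is harmless and this direction still goes through.

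The misreading is fatal for the reverse inclusion. You correctly sense that a recursion passing full integer coefficients up the call stack would overflow polynomial space, but the fix you propose---a normal form in which every gate carries a polynomial with only singly-exponentially bounded integer coefficients---cannot exist. A constant-free polynomial-depth circuit can output, say, the constant polynomial $2^{2^n}$ by $n$ iterated squarings of $1+1$, and this is a perfectly legitimate $\unif\vpspacezero$ family (its bit size is $2^n$); the output gate itself already violates your proposed normal form. The definition of the coefficient function is tailored to exactly this situation: it returns a single requested \emph{bit} of the coefficient, with the bit index $i$ given in binary, so the algorithm must never materialize the coefficient. The correct move is to run your recursive convolution modulo polynomially many small primes (each residue fits in polynomial space) and then recover the $i$-th bit from those residues by simulating, gate by gate in polynomial space, the uniform polynomial-depth circuits for Chinese remaindering, iterated product and division (Beame--Cook--Hoover, Chiu--Davida--Litow). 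Without some such bit-extraction mechanism the inclusion $\unif\vparzero\subseteq\unif\vpspacezero$ is not established, and this, not coefficient shrinking, is the delicate point you were looking for.
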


We see here the similarity with \parc, which by definition are those languages
recognized by uniform algebraic circuits of polynomial depth. But of course
there is no test gate in the arithmetic circuits of \unif\vparzero.

We now turn to some properties of \vpspace. The following two propositions
come from~\cite{KP2006}. They stress the unlikeliness of the hypothesis that
\vpspace\ has polynomial-size circuits.

\begin{proposition}\label{prop_equivalence}
  Assuming the generalized Riemann hypothesis (GRH), $\vpnb=\vpspace$ if and
  only if $[\p/\poly=\pspace/\poly\mbox{ and }\vp=\vnp]$.
\end{proposition}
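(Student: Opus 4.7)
Assume $\vpnb=\vpspace$; only $\vpspace\subseteq\vpnb$ is nontrivial. First, any $\vnp$ family has polynomial degree, polynomially bounded coefficient sizes, and coefficient function in $\#\p/\poly\subseteq\pspace/\poly$, hence lies in $\vpspace$ and, by hypothesis, in $\vpnb$; Strassen's degree reduction then yields a $\vp$-circuit since the degree is already polynomial, so $\vnp\subseteq\vp$. For $\pspace/\poly\subseteq\p/\poly$, given $L\in\pspace$ I consider the multilinear interpolating polynomial
\[ f_n(\bar x)=\sum_{y\in\{0,1\}^n}[y\in L]\prod_i\bigl(y_ix_i+(1-y_i)(1-x_i)\bigr), \]
so that $f_n(y)=[y\in L]$ on Boolean inputs; its coefficient of $\bar x^{\chi_S}$ is the Möbius sum $\sum_{y\leq\chi_S}(-1)^{|S|-|y|}[y\in L]$, which lies in $\pspace$, placing $(f_n)\in\vpspace$ and so in $\vpnb$. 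To decide $y\in L$, I evaluate the polynomial-size circuit at $y$ and apply Koiran's GRH-based polynomial identity test to decide whether the output is zero; this places $L$ in $\p/\poly$.

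\textbf{Reverse direction ($\Leftarrow$).} Assume $\p/\poly=\pspace/\poly$ and $\vp=\vnp$, and take $(f_n)\in\vpspace$. Its coefficient function lies in $\pspace\subseteq\p/\poly$, so each bit $a(n,\alpha,i)$ is computed by a polynomial-size Boolean circuit, which arithmetizes to a polynomial-size arithmetic circuit in the Boolean inputs $(\alpha,i)$. I then write
\[ f_n(\bar x)=\sum_{(\alpha,i)}(-1)^{a(n,\alpha,0)}\,a(n,\alpha,i)\,2^{i-1}\,\bar x^\alpha, \]
where the outer sum ranges over exponentially many $(\alpha,i)$ and each summand admits a polynomial-size arithmetic circuit: repeated squaring produces $2^{i-1}$ and, for each variable, $x_j^{\alpha_j}=\prod_k\bigl(1+\alpha_{j,k}(x_j^{2^k}-1)\bigr)$, where $\alpha_{j,k}$ is the $k$-th bit of $\alpha_j$. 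This expresses $(f_n)$ as a $\vnp$-style exponential sum of unbounded degree, i.e.\ in the unbounded-degree analogue $\vnp_{\mathsf{nb}}$ of $\vnp$. The hypothesis $\vp=\vnp$ transfers to $\vpnb=\vnp_{\mathsf{nb}}$ because the standard simulation is degree-oblivious, yielding $(f_n)\in\vpnb$.

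\textbf{Main obstacle.} The most delicate point is the transfer $\vp=\vnp\Rightarrow\vpnb=\vnp_{\mathsf{nb}}$ used in the reverse direction: this is not formal from the stated hypothesis because $\vp$ and $\vpnb$ differ in their degree requirements, and I would verify it by tracing the degree through the standard $\vnp$-to-$\vp$ simulation and confirming that no step essentially relies on a polynomial degree bound. The GRH step in the forward direction is comparatively routine, and GRH is used precisely to handle the $\cc$-constants that the $\vpnb$-circuit may introduce even when $f_n$ itself has integer coefficients.
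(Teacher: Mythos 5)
The paper does not actually prove this proposition; it is quoted from \cite{KP2006} without argument. Your reconstruction has the right overall shape, and I believe it matches the route taken in that reference (coefficient-function characterizations of \vnp\ and \vpspace, multilinear interpolation, and degree reduction), but two steps are stated too loosely to count as a proof as written.

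First, in the forward direction, ``evaluate the circuit at $y$ and apply Koiran's GRH-based polynomial identity test'' undersells what is happening. A \vpnb\ circuit carries arbitrary constants from $\cc$, so after fixing a Boolean $y$ you are left with a straight-line program over transcendental constants, not an integer circuit to which a PIT result applies directly. What one really needs is: (i) since $\deg f_n=n$, pass from a \vpnb\ circuit to a \vp\ circuit whose \emph{formal} degree is polynomial (Malod's multiplicative-disjointness normal form), so that the circuit also has polynomial degree in the constants $\bar a$; (ii) then invoke B\"urgisser-style constant elimination, which under GRH replaces $\bar a$ by integers of polynomial bitsize while preserving the computed polynomial. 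Only after that is the evaluation at $y$ a genuine integer computation in $\p/\poly$. Your closing remark that ``GRH is used precisely to handle the $\cc$-constants'' identifies the right culprit, but the mechanism is constant elimination, not identity testing, and the formal-degree normalization is a necessary prerequisite that you omit.

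Second, the reverse direction hinges on $\vp=\vnp \Rightarrow \vpnb=\vnp_{\mathsf{nb}}$, which you rightly single out as the delicate point but then dismiss with ``trace the degree through the standard simulation.'' This is not a degree-bookkeeping exercise on the usual $\vnp$-to-$\vp$ collapse: it is a theorem of Malod, proved via the decomposition of a \vpnb\ circuit into a polynomial-degree skeleton composed with monomials $x_j^{2^k}$ obtained by repeated squaring. One then pushes the Boolean summation inside the skeleton (where $y_j^{2^k}=y_j$ collapses the high-degree $y$-monomials), applies $\vp=\vnp$ to the resulting polynomial-degree family, and re-substitutes the $x$-monomials. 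Without this decomposition the transfer is not ``formal,'' exactly as you worried. With these two ingredients made explicit (and $\#\p/\poly$ replaced by $\gapppoly$ to account for negative $\vnp$ coefficients, a cosmetic point), your argument is sound and is, as far as I can tell, the same proof as in \cite{KP2006}.
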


\begin{proposition}
  $\unif\vpspacezero = 
\unif\vpnbzero\Longrightarrow\pspace=\pu\nc$.
\end{proposition}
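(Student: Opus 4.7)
The inclusion $\pu\nc \subseteq \p \subseteq \pspace$ is automatic, so it suffices to derive $\pspace \subseteq \pu\nc$ from the hypothesis. My plan is to associate to each \pspace\ language a polynomial family in $\unif\vpspacezero$ whose evaluations on Boolean inputs recover the characteristic function, invoke the hypothesis to obtain polynomial-size arithmetic circuits for it, and convert those arithmetic circuits into polylogarithmic-depth Boolean circuits.

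Fix $L \in \pspace$ and let $g_n \in \zz[y_1,\ldots,y_n]$ be the multilinear extension of $\chi_L$:
\[
  g_n(\bar y) = \sum_{\bar\epsilon \in \{0,1\}^n} \chi_L(\bar\epsilon)\prod_{i=1}^n\bigl( y_i^{\epsilon_i}(1-y_i)^{1-\epsilon_i} \bigr),
\]
so that $g_n(\bar\epsilon) = \chi_L(\bar\epsilon)$ for every $\bar\epsilon \in \{0,1\}^n$. Expanding the product, the coefficient of $\bar y^{\bar\delta}$ equals $\sum_{\bar\epsilon \leq \bar\delta}(-1)^{|\bar\delta|-|\bar\epsilon|}\chi_L(\bar\epsilon)$ (with $|\cdot|$ the Hamming weight), an integer bounded by $2^n$ in absolute value whose bits are computable in polynomial space by enumerating the at most $2^n$ admissible $\bar\epsilon$ with an $n$-bit counter and calling the \pspace\ machine for $L$. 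With $n$ variables, degree $n$ and coefficients of bit size $O(n)$, the family $(g_n)$ lies in $\unif\vpspacezero$.

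Applying the hypothesis $\unif\vpspacezero = \unif\vpnbzero$ then yields a \p-uniform family $(C_n)$ of polynomial-size constant-free arithmetic circuits computing $g_n$. Because $\deg g_n \leq n$ is polynomial, a Strassen-style degree restriction followed by the Valiant--Skyum--Berkowitz--Rackoff parallelization produces, \p-uniformly, an equivalent family $(C'_n)$ of polynomial-size arithmetic circuits of depth $O(\log^2 n)$. Since $g_n(\bar\epsilon) \in \{0,1\}$ on a Boolean input $\bar\epsilon$, membership in $L$ can be tested by evaluating $C'_n(\bar\epsilon)$ modulo $2$; each $+$, $-$ or $\times$ operation collapses to a constant-depth Boolean gate, so the polylogarithmic depth of $C'_n$ is preserved. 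This produces a \p-uniform, polynomial-size, polylogarithmic-depth Boolean circuit family deciding $L$, placing $L$ in $\pu\nc$.

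The main obstacle is the parallelization step: the hypothesis only delivers polynomial size, a priori with polynomial depth, whereas \nc\ demands polylogarithmic depth. Everything therefore hinges on $g_n$ having \emph{polynomial} degree, since this is exactly what makes VSBR-style depth reduction applicable; reducing modulo $2$ at the end is the standard trick that side-steps integer-size blow-ups and turns the polylog-depth arithmetic circuit into a polylog-depth Boolean one.
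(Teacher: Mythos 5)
The paper does not reproduce a proof of this proposition; it cites~\cite{KP2006}, so a line-by-line comparison against the paper's own argument is not possible. That said, your proof is correct and is almost certainly the intended one. The two load-bearing observations are precisely the ones you flag yourself: the multilinear extension of $\chi_L$ has \emph{polynomial} degree (with coefficients of polynomial bit-length computable in \pspace), which is what makes Strassen homogenization followed by Valiant--Skyum--Berkowitz--Rackoff depth reduction applicable to the constant-free circuit delivered by the hypothesis; and reducing modulo $2$ on Boolean inputs turns the resulting polylog-depth arithmetic circuit directly into a polylog-depth Boolean circuit, side-stepping any concern about the bit-length of intermediate integer values.

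Two small points deserve an explicit sentence in a polished version. First, the homogenization step is genuinely necessary and not just a safety net: a gate of a $\vpnbzero$ circuit may compute a polynomial of degree exponential in the circuit size even when the output polynomial has degree only $n$, so VSBR cannot be invoked without first truncating to degree $\leq n$ at every gate. Second, both Strassen's homogenization and the VSBR transformation are effective constructions running in time polynomial in the circuit size, so \p-uniformity is preserved along the whole chain; this is exactly why the \emph{uniform} hypothesis $\unif\vpspacezero = \unif\vpnbzero$ is the right one for this proposition (the nonuniform equality $\vpspace = \vpnb$ would only yield $\pspace/\poly \subseteq \nc/\poly$), and it would be worth stating this explicitly.
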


\begin{remark}
  To the authors' knowledge, the separation ``$\pspace\neq\pu\nc$'' is not
  known to hold (by contrast, $\pspace$ can be separated from logspace-uniform
  \nc\ thanks to the space hierarchy theorem).
\end{remark}

Let us now state the main result of this paper.
\begin{theorem}[main theorem]\label{th_transfer}
  If $\unif\vpspacezero=\unif\vpzeronb$ then $\parczero=\pczero$.
\end{theorem}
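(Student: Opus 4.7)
I plan to follow the standard transfer template: first establish, unconditionally, that every language in $\parczero$ is decided by a polynomial-time $\pczero$-style algorithm equipped with an oracle that, given a description of a \unif\vpspacezero\ family and an input point $\bar x$, returns whether the polynomial vanishes at $\bar x$; then invoke the hypothesis $\unif\vpspacezero=\unif\vpzeronb$ to replace each oracle query by the evaluation of a polynomial-size constant-free arithmetic circuit followed by a single equality test, which is itself a $\pczero$ operation. Composing, and noting that the trivial inclusion $\pczero\subseteq\parczero$ always holds, this will witness $\parczero=\pczero$.

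For the unconditional part, I take $L\in\parczero$ decided by a P-uniform family $(C_n)$ of polynomial-depth, constant-free algebraic circuits, and fix an input $\bar x\in\cc^n$. After freezing a guess for the outcomes of the test gates, the arithmetic portion of $C_n$ computes polynomials of exponential degree that form a \unif\vparzero\ family; by Proposition~\ref{prop_alternate} this is exactly \unif\vpspacezero. In particular the ``test polynomials''---the inputs to the test gates, viewed as polynomials in $\bar x$ once the outcomes of the tests feeding into them are fixed---form a \unif\vpspacezero\ family indexed by the pair (test gate, sign pattern below it). There are exponentially many such objects, so I cannot afford to inspect them one by one.

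The key ingredient, developed in Section~\ref{membership}, is a procedure that decides, with only polynomially many \unif\vpspacezero\ queries, whether $\bar x$ lies in a complex variety presented as the union of an exponential number of varieties, each defined by an exponential number of polynomials. Granting this, I determine the satisfiable sign condition realised by $\bar x$ on the test polynomials in polynomially many queries, using the degree and coefficient estimates of Theorem~\ref{th_elimination} to keep all intermediate polynomials inside \unif\vpspacezero. Once this sign condition is pinned down, $C_n(\bar x)$ becomes a Boolean combination that can be read off in polynomial time, completing the unconditional simulation of $\parczero$ by a $\pczero$ machine with \unif\vpspacezero\ oracle.

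I expect the membership procedure of Section~\ref{membership} to be the main obstacle. Over $\rr$ one collapses exponentially many equalities into a single one via a sum of squares, and the constructive form~\cite{CJKPT2006} of a combinatorial lemma of Grigoriev organises the sign tests efficiently; neither trick is available over $\cc$, where a sum of squares cannot detect joint vanishing. The substitute must extract from an exponential union of complex varieties enough algebraic information (via auxiliary polynomials built by quantifier elimination, or generic projections) to let polynomially many \unif\vpspacezero\ queries decide membership. Once this is in place, the transfer theorem is essentially a rewriting exercise: under $\unif\vpspacezero=\unif\vpzeronb$ each oracle query collapses to the evaluation of a polynomial-size constant-free arithmetic circuit plus an equality test, and the composition is a P-uniform polynomial-size algebraic circuit family for $L$, placing $L$ in $\pczero$.
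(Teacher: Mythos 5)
Your overall strategy is the same as the paper's: reduce $\parczero$ unconditionally to a polynomial-time algorithm with \unif\vpspacezero\ tests, then collapse each test to a circuit evaluation plus an equality under $\unif\vpspacezero=\unif\vpzeronb$. You correctly identify the membership-to-a-union-of-varieties procedure of Section~\ref{membership} as the main technical ingredient, and you correctly observe that the test polynomials of $C_n$ form a \unif\vpspacezero\ family via Proposition~\ref{prop_alternate}.

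There is, however, a genuine gap in your last step. You assert that once the sign condition of $\bar x$ has been pinned down, ``$C_n(\bar x)$ becomes a Boolean combination that can be read off in polynomial time.'' This is not so: $C_n$ has polynomial \emph{depth} but, for a \parczero\ family, \emph{exponential size}, so even with all test outcomes fixed, tracing the circuit to its output gate takes exponential time. What is true (and what the paper uses in Lemma~\ref{lem_accept}) is that this simulation can be done level by level in polynomial \emph{space}, given the rank $i$ of the sign condition. That polynomial-space boolean predicate ``is rank $i$ accepted?'' is then packaged as one additional \unif\vpspacezero\ test on the bits of $i$ to finish the reduction. Without this last oracle call, the unconditional half of your transfer does not go through. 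A second, lesser omission: you do not explain how ``polynomially many queries'' suffice to locate the sign condition. The paper's mechanism is to sort the satisfiable sign conditions by a linear order compatible with set inclusion, observe that the sign condition of $\bar x$ is the \emph{smallest} candidate $j$ with $\bar x\in V_j$, and then binary-search on $j$ using the tests $\bar x\in\bigcup_{k\le j}V_k$ supplied by Corollary~\ref{cor_union}. You would need to spell out some such argument; merely having a union-of-varieties membership test does not by itself yield the rank in polynomially many calls.
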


Note that the collapse of the constant-free class
\parczero\ to
\pczero\ implies $\parc=\pc$:
just replace constants by new variables so as to transform a \parc\
problem into a \parczero\ problem, and then replace these variables 
by their original values so as to transform a $\pczero$ problem 
into a \pc\ problem.

The next section is devoted to the problem of testing whether a point belongs
to a variety.  This problem is useful for the proof of the theorem: indeed,
following~\cite{koiran2000}, several tests of membership to a variety will be
made; the point here is to make them constructive and efficient. The main
difficulty is that the variety can be defined by an exponential number of
polynomials.

\section{Testing Membership to a Union of Varieties}\label{membership}

In this section we explain how to perform in \unif\vpspacezero\ membership
tests of the form ``$\bar x\in V$'', where $V\subseteq\cc^n$ is a variety. We
begin in Section~\ref{section_variety} by the case where $V$ is given by $s$
polynomials. In that case, we determine after some precomputation whether
$\bar x\in V$ in $n+1$ tests. We first need two lemmas given below in order to
reduce the number of polynomials and to replace transcendental elements by
integers.

Then, in Section~\ref{section_union}, we deal with the case where $V$ is given
as a union of an exponential number of such varieties, as in the actual tests
of the algorithm of Section~\ref{sec_proof}. Determining whether $\bar x\in V$
still requires $n+1$ tests, but the precomputation is slightly heavier.

Let us first state two useful lemmas. Suppose a variety $V$ is defined by
$f_1,\dots,f_s$, where $f_i\in\zz[x_1,\dots,x_n]$. We are to determine whether
$\bar x\in V$ with only $n+1$ tests, however big $s$ might be. In a
nonconstructive manner, this is possible and relies on the following classical
lemma already used (and proved) in~\cite{koiran2000}: any $n+1$ ``generic''
linear combinations of the $f_i$ also define $V$ (the result holds over any
infinite field but here we need it only over \cc). We state this lemma
explicitly since we will also need it in our constructive proof.

\begin{lemma}\label{lemma_variety}
  Let $f_1,\dots,f_s\in\zz[x_1,\dots,x_n]$ be polynomials and $V$ be the
  variety of $\cc^n$ they define. Then for all coefficients
  $(\alpha_{i,j})_{i=1..s,j=1..n+1}\in\cc^{s(n+1)}$ algebraically independent
  over \qq, the $n+1$ linear combinations $g_j=\sum_{i=1}^s\alpha_{i,j}f_i$ (for
  $j$ from 1 to $n+1$) also define $V$.
\end{lemma}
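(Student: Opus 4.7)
The plan is to establish the nontrivial inclusion by a dimension argument coupled with the $\qq$-definability of the ``bad'' set. The easy direction is automatic: each $g_j$ is a $\cc$-linear combination of the $f_i$, so any common zero of the $f_i$ is a common zero of the $g_j$, giving $V \subseteq V(g_1, \ldots, g_{n+1})$.

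For the reverse inclusion, I would study the incidence set
\[
B = \{(\bar x, \alpha) \in (\cc^n \setminus V) \times \cc^{s(n+1)} : g_1(\bar x) = \cdots = g_{n+1}(\bar x) = 0\}.
\]
For fixed $\bar x \notin V$, the vector $v(\bar x) = (f_1(\bar x), \ldots, f_s(\bar x))$ is nonzero, so the requirement that each of the $n+1$ rows $(\alpha_{1,j}, \ldots, \alpha_{s,j})$ be orthogonal to $v(\bar x)$ cuts out a linear subspace of codimension $n+1$ in the $\alpha$-space, and the fiber of $B$ over $\bar x$ has dimension $(s-1)(n+1)$. The fiber-dimension inequality then gives
\[
\dim B \leq n + (s-1)(n+1) = s(n+1) - 1 < s(n+1).
\]

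The next step is to project $B$ to the $\alpha$-coordinates. Because the $f_i$ have integer coefficients, $B$ is defined by a first-order formula over $\cc$ with rational parameters, so by quantifier elimination (Theorem~\ref{th_elimination}) the image $\pi_\alpha(B)$ is a constructible set whose Zariski closure is cut out by polynomials in $\qq[\alpha_{i,j}]$; the dimension bound above forces this closure to be a proper closed subset of $\cc^{s(n+1)}$. A tuple $(\alpha_{i,j})$ algebraically independent over $\qq$ satisfies no nontrivial polynomial relation over $\qq$, hence lies outside this closure and a fortiori outside $\pi_\alpha(B)$. Consequently no $\bar x \notin V$ can satisfy $g_j(\bar x) = 0$ for all $j$, giving $V(g_1, \ldots, g_{n+1}) \subseteq V$.

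The main subtlety I expect is the descent to $\qq$-definability. A naive ``fix $\bar x$ first'' attempt would observe that the bad $\alpha$'s lie on a product of $n+1$ hyperplanes with coefficients $f_i(\bar x) \in \cc$, which need not sit in $\qq$, so algebraic independence over $\qq$ does not directly forbid them. The fix is to project out $\bar x$ first: this turns the whole union of bad hyperplanes indexed by $\bar x \notin V$ into a single $\qq$-constructible subset of $\cc^{s(n+1)}$ whose proper Zariski closure is then genuinely avoided by any $\qq$-algebraically-independent tuple. Once this point is set up, the dimension count and conclusion are routine.
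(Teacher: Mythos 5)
The paper does not reproduce a proof of this lemma; it cites \cite{koiran2000} (``already used (and proved)'' there), so a line-by-line comparison against the paper's own argument is not possible. That said, your proof is correct and is the standard incidence-variety argument for this kind of generic-position statement. You are also right to flag the naive fiberwise attempt as the subtle point: the bad $\alpha$'s for a \emph{fixed} $\bar x$ sit on hyperplanes with transcendental coefficients $f_i(\bar x)$, which algebraic independence over \qq\ says nothing about, and projecting out $\bar x$ is exactly the right fix.

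The one step you assert without justification is that the Zariski closure of $\pi_\alpha(B)$ is avoided by any \qq-algebraically-independent tuple. What you actually need (and what is true) is that some nonzero $p\in\qq[\alpha]$ vanishes identically on $\pi_\alpha(B)$. Here is the missing half-paragraph. Since $B$ is cut out by polynomial conditions with integer coefficients and $\cc$ admits quantifier elimination, $\pi_\alpha(B)$ is given by a quantifier-free formula with integer coefficients; put it in disjunctive normal form. A nonempty clause containing no nontrivial equality is a nonempty Zariski-open set, hence of full dimension $s(n+1)$, contradicting your bound $\dim \pi_\alpha(B)\le\dim B\le s(n+1)-1$. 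So every nonempty clause contains a nonzero polynomial equality $p_k=0$ with $p_k\in\zz[\alpha]$, and the product $p=\prod_k p_k$ is a nonzero integer polynomial vanishing on all of $\pi_\alpha(B)$. An algebraically independent tuple satisfies $p(\alpha)\ne 0$, so it lies outside $\pi_\alpha(B)$, which is precisely the reverse inclusion $V(g_1,\dots,g_{n+1})\subseteq V$. With this detail filled in, the proof is complete; the fiber dimension $(s-1)(n+1)$, the resulting bound $\dim B\le s(n+1)-1$, and the trivial case $V=\cc^n$ (all $f_i\equiv 0$) are all handled correctly.
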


Unfortunately, in our case we cannot use transcendental numbers and must
replace them by integers. The following lemma from~\cite{koiran1997} asserts
that integers growing sufficiently fast will do. Once again, this is a weaker
version adapted to our purpose.

\begin{lemma}\label{lemma_integers}
  Let $\phi(\alpha_1,\dots,\alpha_r)$ be a quantifier-free first-order formula
  over the structure $(\cc,0,1,+,-,\times,=)$, containing only
  polynomials of degree less than $D$ and whose coefficients are integers of
  absolute value strictly bounded by $C$. Assume furthermore that
  $\phi(\bar\alpha)$ holds for all coefficients
  $\bar\alpha=(\alpha_1,\dots,\alpha_r)\in\cc^r$ algebraically independent
  over \qq.

  Then $\phi(\bar\beta)$ holds for any sequence $(\beta_1,\dots,\beta_r)$ of
  integers satisfying $\beta_1\geq C$ and $\beta_{j+1}\geq CD^j\beta_j^D$ (for
  $1\leq j\leq r-1$).
\end{lemma}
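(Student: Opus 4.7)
The plan is to reduce Lemma~\ref{lemma_integers} to a purely quantitative claim about the non-vanishing of nonzero integer polynomials at the fast-growing point $\bar\beta$, and then to establish that claim by induction on the number of variables $r$.

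First I would put $\phi$ into a normal form. A quantifier-free formula over $(\cc,0,1,+,-,\times,=)$ is a Boolean combination of atomic equations $t_1 = t_2$, each of which can be rewritten as $p_i(\bar\alpha) = 0$ for a polynomial $p_i \in \zz[\alpha_1,\dots,\alpha_r]$ of degree less than $D$ with integer coefficients of absolute value less than $C$. For an algebraically independent tuple $\bar\alpha$, the equation $p_i(\bar\alpha) = 0$ holds iff $p_i$ is the zero polynomial. Hence the hypothesis says that the propositional formula obtained from $\phi$ by replacing each atom ``$p_i = 0$'' by the Boolean value ``$p_i \equiv 0$'' evaluates to true. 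To conclude $\phi(\bar\beta)$ it therefore suffices to show that the same truth values arise at $\bar\beta$: the case $p_i \equiv 0$ is automatic, and the case $p_i \not\equiv 0$ amounts to the claim $p_i(\bar\beta) \neq 0$.

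The core of the proof is thus the following non-vanishing claim: every nonzero $p \in \zz[\alpha_1,\dots,\alpha_r]$ of degree less than $D$ with coefficients of absolute value less than $C$ satisfies $p(\bar\beta) \neq 0$. I would prove this by induction on $r$, maintaining alongside it a quantitative upper bound on $|p(\bar\beta)|$ so that the inductive step can control the cancellations in the sum defining $p(\bar\beta)$. The base case $r=1$ is immediate from the rational roots theorem: any integer root of a nonzero $p \in \zz[\alpha_1]$ divides a nonzero coefficient of $p$, so has absolute value at most $C-1 < \beta_1$. For the inductive step, expand $p = \sum_{k=0}^{d} q_k(\alpha_1,\dots,\alpha_{r-1})\,\alpha_r^k$ with $d < D$, and let $d^{\star}$ be the largest index such that $q_{d^{\star}} \neq 0$. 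By the inductive non-vanishing statement, $q_{d^{\star}}(\bar\beta_{<r})$ is a nonzero integer, and by the inductive size bound each $|q_k(\bar\beta_{<r})|$ is controlled by some $B$ of order $C\beta_{r-1}^{D-1}$ (up to polynomial factors in $r$ and $D$). The triangle inequality then yields $|Q(\beta_r)| \geq \beta_r^{d^{\star}} - d^{\star} B\, \beta_r^{d^{\star}-1}$ for $Q(y) := p(\bar\beta_{<r}, y)$, and the recurrence $\beta_r \geq CD^{r-1}\beta_{r-1}^D$ guarantees that this quantity is strictly positive.

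The main obstacle is the quantitative bookkeeping: one must simultaneously maintain the non-vanishing statement and a compatible upper bound on $|p(\bar\beta)|$, and the iterated exponential growth $\beta_{j+1} \geq CD^j\beta_j^D$ is tuned precisely so that at each step $\beta_r$ dominates the largest possible evaluation of a degree-$(D-1)$ polynomial in $r-1$ variables at the already-chosen $\bar\beta_{<r}$. Any slower growth (merely polynomial in $\beta_{r-1}$, for instance) would leave the leading term $\beta_r^{d^{\star}}$ unable to absorb the lower-order contributions, so the specific form of the recurrence in the statement is essential to making the induction close.
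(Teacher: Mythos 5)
Your proof is correct and matches the approach the paper points to: the paper gives no proof of its own but cites \cite[Lemma~5.4]{koiran1997} and says the argument ``relies on the lack of big integer roots of multivariate polynomials,'' which is exactly your central non-vanishing claim proved by induction on the number of variables. A complete write-up would need to fix the inductive size bound precisely (e.g.\ $|p(\bar\beta_{\leq m})| \leq \beta_m^D - 1$ works cleanly with $\beta_1\geq C$ and $\beta_{j+1}\geq CD^j\beta_j^D$), but your sketch has the right structure and the recurrence does close the induction.
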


The proof can be found in~\cite[Lemma~5.4]{koiran1997} and relies on the lack
of big integer roots of multivariate polynomials.

Let us sketch a first attempt to prove a constructive version of
Lemma~\ref{lemma_variety}, namely that $n+1$ polynomials with integer
coefficients are enough for defining $V$ (this first try will not work but
gives the idea of the proof of the next section). The idea is to use
Lemma~\ref{lemma_integers} with the formula $\phi(\bar\alpha)$ that tells us
that the $n+1$ linear combinations of the $f_i$ with $\alpha_{i,j}$ as
coefficients define the same variety as $f_1,\dots,f_s$.  At first this
formula is not quantifier-free, but over $\cc$ we can eliminate quantifiers
while keeping degree and coefficients reasonably small thanks to
Theorem~\ref{th_elimination}. Lemma~\ref{lemma_variety} asserts that
$\phi(\bar\alpha)$ holds as soon as the $\alpha_{i,j}$ are algebraically
independent. Then Lemma~\ref{lemma_integers} tells us that $\phi(\bar\beta)$
holds for integers $\beta_{i,j}$ growing fast enough. Thus $V$ is now defined
by $n+1$ linear combinations of the $f_i$ with integer coefficients.

In fact, this strategy fails to work for our purpose because the coefficients
involved are growing too fast to be computed in polynomial space. That is why
we will proceed by stages in the proofs below: we adopt a divide-and-conquer
approach and use induction.

\subsection{Tests of Membership}\label{section_variety}

The base case of our induction is the following lemma, whose proof is sketched
in the end of the preceding section. We only consider here a small number of
polynomials, therefore avoiding the problem of too big coefficients mentioned
in the preceding section.
\begin{lemma}\label{lemma_split_two}
  There exists a polynomial $q(n,d)$ such that, if $V\subseteq\cc^n$ is a
  variety defined by $2(n+1)$ polynomials
  $f_1,\dots,f_{2(n+1)}\in\zz[x_1,\dots,x_n]$ of degree $\leq 2^d$ and of
  coefficients bounded by $2^{2^M}$ in absolute value, then:
  \begin{enumerate}
  \item the variety $V$ is defined by $n+1$ polynomials
    $g_1,\dots,g_{n+1}\in\zz[x_1,\dots,x_n]$ of degree $\leq 2^d$ and of
    coefficients bounded by $2^{2^{M+q(n,d)}}$
    in absolute value;
  \item furthermore, the coefficients of the $g_i$ are bitwise computable from
    those of the $f_j$ in working space $Mq(n,d)$.
  \end{enumerate}
\end{lemma}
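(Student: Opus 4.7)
The plan is to make precise the strategy sketched at the end of the preceding subsection, applied with $s = 2(n+1)$ instead of an arbitrary $s$. First I would introduce $r = 2(n+1)^2$ auxiliary variables $\alpha_{i,j}$ ($1\leq i\leq 2(n+1)$, $1\leq j\leq n+1$) and consider the formula
$$\phi(\bar\alpha)\;\equiv\;\forall\bar x\;\Bigl[\bigwedge_{i=1}^{2(n+1)}f_i(\bar x)=0\;\Longleftrightarrow\;\bigwedge_{j=1}^{n+1}\sum_{i=1}^{2(n+1)}\alpha_{i,j}f_i(\bar x)=0\Bigr].$$
Viewed in the joint variables $(\bar x,\bar\alpha)$, the underlying polynomials number at most $3(n+1)\leq 2^{O(\log n)}$, have degree bounded by $2^{d+1}$, and have integer coefficients bounded by $2^{2^M}$. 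I would invoke Theorem~\ref{th_elimination} (in its standard extension allowing free parameters) to eliminate the quantifier on $\bar x$, producing a quantifier-free formula $\phi'(\bar\alpha)$ whose polynomials have degree $\leq D':=2^{p'(n,d)}$ and integer coefficients bounded by $C':=2^{2^{M+p'(n,d)}}$, for some polynomial $p'$.

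By Lemma~\ref{lemma_variety}, $\phi'$ is satisfied on every $\qq$-algebraically-independent tuple, so Lemma~\ref{lemma_integers} supplies integers $\beta_{i,j}$ (enumerated in any fixed order $\beta_1,\dots,\beta_r$) with $\beta_1\geq C'$ and $\beta_{k+1}\geq C'(D')^k\beta_k^{D'}$ for which $\phi(\bar\beta)$ holds. I would choose these $\beta_k$ to be powers of $2$: write $\beta_k=2^{a_k}$ with $a_k$ the smallest integer meeting the recurrence. Unrolling yields $a_k\leq 2^{M+Q(n,d)}$ for all $k\leq r$ and a suitable polynomial $Q$, because both $\log D' = p'(n,d)$ and the number of iterations $r$ are polynomial in $n,d$. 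Setting $g_j(\bar x):=\sum_{i=1}^{2(n+1)}\beta_{i,j}f_i(\bar x)$ preserves the degree bound $2^d$; each coefficient of $g_j$ is a $2(n+1)$-term sum of products $\beta_{i,j}\cdot c$ with $c$ a coefficient of some $f_i$, giving the bound $2^{2^{M+q(n,d)}}$ for a polynomial $q$ absorbing the additive contributions. Since $\phi(\bar\beta)$ holds, the $g_j$ do define $V$, yielding item~1.

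For item~2 the key observation is that the $\beta_{i,j}$ depend only on $n,d,M$, not on the $f_i$. Each $a_k$ can be produced by iterating its recurrence at most $r$ times, storing the current value in $M+Q(n,d)$ bits, which fits inside working space $Mq(n,d)$. Because $\beta_{i,j}=2^{a_{i,j}}$, multiplying a coefficient of $f_i$ by $\beta_{i,j}$ is just a left shift by $a_{i,j}$ bits: the $m$-th bit of the product is the $(m-a_{i,j})$-th bit of the original coefficient, which we read off from the coefficient function of $f_i$. The desired bit of a coefficient of $g_j$ is then assembled by standard bit-by-bit carry-propagating summation of the $2(n+1)$ shifted coefficients, iterating a counter of $M+q(n,d)$ bits up to bit position $2^{M+q(n,d)}$ while maintaining a running carry of magnitude $O(\log n)$; the total working space remains $O(Mq(n,d))$.

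The main obstacle I anticipate is controlling the size of the $\beta_k$, since Lemma~\ref{lemma_integers} compounds the degree $D'$ multiplicatively, so that a naive reading gives $a_r$ of order $(D')^r$, which is doubly exponential in the input size. The crucial point is that both $\log D' = p'(n,d)$ and $r=2(n+1)^2$ are polynomial, so the exponent $r\cdot p'(n,d)$ is polynomial and $a_r$ remains \emph{single}-exponential in $n,d$. This is exactly the calculation that the naive attempt at the end of the preceding subsection fails to sustain because it allows the number $s$ of input polynomials to be unbounded; restricting the base case to $s=2(n+1)$ is precisely what keeps both $r$ and $Q(n,d)$ polynomial and makes the bookkeeping tractable.
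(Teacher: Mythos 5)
Your proposal follows essentially the same route as the paper's proof: the same formula $\phi(\bar\alpha)$, elimination via Theorem~\ref{th_elimination}, then Lemma~\ref{lemma_variety} followed by Lemma~\ref{lemma_integers}, with the $\beta$'s chosen to be powers of two so that the bitwise computation reduces to shifts and carry-propagating addition, and the key observation that keeping $r=2(n+1)^2$ polynomial is what keeps the exponents single-exponential. The only cosmetic differences are that the paper picks the explicit formula $\beta_i = 2^{2^{M+2ir}}$ rather than the ``smallest power of two satisfying the recurrence,'' and that one sentence in your discussion calls $(D')^r$ ``doubly exponential'' when it is in fact single-exponential in $n,d$ (it is $\beta_r = 2^{a_r}$ that is doubly exponential) --- a slip that does not affect the correctness of the argument.
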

\begin{proof}
  The first-order formula $\phi(\bar\alpha)$ (where
  $\bar\alpha\in\cc^{2(n+1)^2}$), expressing that the $n+1$ linear
  combinations of the $f_j$'s with coefficients $\bar\alpha$ also define $V$,
  can be written as follows:
  $$\phi(\bar\alpha)\equiv \forall x\in\cc^n\left(
    \bigwedge_{i=1}^{n+1} \sum_{j=1}^{2(n+1)}\alpha_{i,j}f_j(x)=0 \leftrightarrow
    \bigwedge_{j=1}^{2(n+1)} f_j(x)=0\right),$$
  where $\alpha_{i,j}$ is a shorthand for $\alpha_{2(i-1)(n+1)+j}$.
  The polynomials in this formula are of degree $\leq 1+2^d$ and their
  coefficients are bounded in absolute value by $2^{2^M}$.

  Over $\cc$, the quantifier of this formula can be eliminated by
  Theorem~\ref{th_elimination}: $\phi(\bar\alpha)$ is equivalent to a
  quantifier-free formula $\psi(\bar\alpha)$, the polynomials occuring in
  which have their degree less than $D=D(n,\log(3(n+1)),d+1)$ and their
  coefficients strictly bounded in absolute value by $C = 2^{2^MD}$, where
  $D(n,\log(3(n+1)),d+1)=2^{p(n,\log(3(n+1)),d+1)}$ is defined in
  Theorem~\ref{th_elimination}.

  By Lemma~\ref{lemma_variety}, $\psi(\bar\alpha)$ holds for all
  coefficients $\bar\alpha$ algebraically independent, so that we wish to
  apply Lemma~\ref{lemma_integers} with integers $\beta_i$ growing
  sufficiently fast. Let $r=(1+2(n+1)^2)p(n,\log(3(n+1)),d+1)$, so that
  $$D\leq 2^r\mbox{ and } CD^{2(n+1)^2}\leq 2^{2^{M+r}}$$
  and define
  $$\beta_i=2^{2^{M+2ir}}\mbox{ for }1\leq i\leq 2(n+1)^2.$$
  Note that for all $i$, $\beta_i\leq 2^{2^{M+4(n+1)^2r}}$, and it is
  furthermore easy to check that $\beta_1\geq C$ and $\beta_{i+1}\geq
  CD^i\beta_i^D$. Thus by Lemma~\ref{lemma_integers}, $\psi(\bar\beta)$ is
  true. Define the polynomial $q(n,d)=1+4(n+1)^2r$ (up to a
  multiplicative constant for the space complexity below).
  Now, letting
  $$g_i=\sum_{j=1}^{2(n+1)}\beta_{i,j} f_j,$$
  where $\beta_{i,j}$ is a shorthand for $\beta_{2(i-1)(n+1)+j}$,
  proves the first point of the lemma.

  For the second point, remark that the coefficients $\beta_i$ are bitwise
  computable in space $O(M+rn^2)$ and that the coefficients of the $g_i$ are
  merely a sum of $2(n+1)$ products of $\beta_j$ and coefficients of the
  $f_k$. This multiplication uses only space $O(M+rn^2)$ since the integers
  involved have encoding size $2^{O(M+rn^2)}$ (in our case this is
  particularly easy because the $\beta_j$ are powers of 2). The $2n+1$
  additions are also performed in space $O(M+rn^2)$. This
  proves the second point of the lemma.\qed
\end{proof}

Proposition~\ref{prop_membership} now follows by induction.

\begin{proposition}\label{prop_membership}
  There exists a polynomial $p(n,s,d)$ such that, if $V$ is a variety defined
  by $2^s$ polynomials $f_1,\dots,f_{2^s}\in\zz[x_1,\dots,x_n]$ of degree
  $\leq 2^d$ and of coefficients bounded by $2^{2^M}$ in absolute value, then:
  \begin{enumerate}
  \item the variety $V$ is defined by $n+1$ polynomials
    $g_1,\dots,g_{n+1}\in\zz[x_1,\dots,x_n]$ of degree $\leq 2^d$ and of
    coefficients bounded by $2^{2^{M+p(n,s,d)}}$ in absolute value;
  \item moreover, the coefficients of the $g_i$ are bitwise computable from
    those of the $f_j$ in working space $Mp(n,s,d)$.
  \end{enumerate}
\end{proposition}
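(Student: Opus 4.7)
The plan is to prove Proposition~\ref{prop_membership} by induction on $s$, following a divide-and-conquer pattern in which Lemma~\ref{lemma_split_two} is invoked at every level to merge $2(n+1)$ intermediate polynomials into $n+1$. The key observation that makes this work is that the intersection of two varieties is defined by the union of their defining polynomials: if the list $f_1,\dots,f_{2^s}$ is split into two halves defining varieties $V_1,V_2$, then $V=V_1\cap V_2$.

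The base case is $2^s\leq 2(n+1)$: pad with copies of the zero polynomial until we have exactly $2(n+1)$ polynomials of degree $\leq 2^d$ with coefficient bound $2^{2^M}$, and apply Lemma~\ref{lemma_split_two} directly. For the inductive step, split $f_1,\dots,f_{2^s}$ into two halves of $2^{s-1}$ polynomials each. By induction, each half yields $n+1$ polynomials of degree $\leq 2^d$ with coefficients bounded by $2^{2^{M+p(n,s-1,d)}}$, defining $V_1$ and $V_2$ respectively. Concatenating gives $2(n+1)$ polynomials defining $V$; apply Lemma~\ref{lemma_split_two} with the parameter $M$ replaced by $M+p(n,s-1,d)$. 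The coefficient bound becomes $2^{2^{M+p(n,s-1,d)+q(n,d)}}$, so taking $p(n,s,d)=s\cdot q(n,d)$ (up to an additive constant from the base case) closes the induction for the first part of the statement.

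For the space bound, I would not materialize the intermediate polynomials (whose coefficients have doubly-exponential bitsize). Instead, each bit of a final output coefficient is computed on demand via a depth-first traversal of the recursion tree: at the root, Lemma~\ref{lemma_split_two} describes each output coefficient as an integer combination of at most $2(n+1)$ input coefficients, each of which is in turn supplied by a recursive call one level down. Whenever the top-level routine needs a bit of an intermediate coefficient, it re-invokes the algorithm recursively rather than storing it, reusing the same working tape. At level $k$ of the recursion (counting from the leaves) the current $M$-parameter is $M+k\cdot q(n,d)$, so the lemma's internal computation uses space $O((M+k\cdot q(n,d))\cdot q(n,d))$. The information passed between nested calls is only an index into a coefficient (polynomially many bits in $n,s,d$) and a single returned bit. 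Summing over the $s$ nested levels gives total space $O(s M\,q(n,d)+s^2 q(n,d)^2)$, which is absorbed into $M\cdot p(n,s,d)$ after enlarging the polynomial $p$ appropriately.

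The main obstacle is precisely this space accounting: without care, nested calls to Lemma~\ref{lemma_split_two} could either require storing full intermediate coefficients (doubly-exponential in size, hence too large) or multiply the per-level space by an extra polynomial factor per level. The recursion has depth $s$, and one must verify that composing $s$ procedures each running in polynomial working space still yields polynomial working space, which relies on the communication across recursion boundaries being small (a coefficient index, not the coefficient itself). Once this is arranged, the rest of the argument is a routine unrolling of the recurrence $p(n,s,d)=p(n,s-1,d)+q(n,d)$.
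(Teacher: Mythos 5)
Your proof takes essentially the same approach as the paper: induction on $s$, splitting the list of $2^s$ polynomials into two halves defining $V_1,V_2$ with $V=V_1\cap V_2$, applying the induction hypothesis to each half, and merging the resulting $2(n+1)$ polynomials via Lemma~\ref{lemma_split_two} with the updated parameter $M+p(n,s-1,d)$. Your explicit on-demand space accounting---yielding $O\bigl(sMq(n,d)+s^2q(n,d)^2\bigr)$---is in fact more careful than the paper's terse claim of $Msq(n,d)$ (which does not fully track the inflation of the $M$-parameter across recursion levels when composing space-bounded procedures), but both bounds are of the required form $M\cdot p(n,s,d)$ for a suitable polynomial $p$, so this is a matter of bookkeeping and the conclusion stands.
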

\begin{proof}
  This is done by induction on $s$. Take $p(n,s,d)=sq(n,d)$ where $q(n,d)$ is
  the polynomial defined in Lemma~\ref{lemma_split_two}. The base case
  $2^s\leq 2(n+1)$ follows from Lemma~\ref{lemma_split_two}. Suppose therefore
  that $2^s > 2(n+1)$. Call $V_1$ and $V_2$ the varieties defined respectively
  by $f_1,\dots,f_{2^{s-1}}$ and by $f_{2^{s-1}+1},\dots,f_{2^s}$. Then
  $V=V_1\cap V_2$ and by induction hypothesis, $V_1$ and $V_2$ are both
  defined by $n+1$ polynomials of degree $\leq 2^d$ whose coefficients are
  bounded by $2^{2^{M+(s-1)q(n,d)}}$ in absolute value and computable in space
  $M(s-1)q(n,d)$.

  Therefore by Lemma~\ref{lemma_split_two}, $V$ is defined by $n+1$
  polynomials of degree $\leq 2^d$ whose coefficients are
  bounded by $2^{2^{M+sq(n,d)}}$ in absolute value and computable in
  space $Msq(n,d)$ as claimed in the proposition.\qed
\end{proof}

\subsection{Union of Varieties}\label{section_union}

In our case, however, the tests made by the algorithm of
Section~\ref{sec_proof} are not exactly of the form studied in the previous
section: instead of a single variety given by $s$ polynomials, we have to
decide ``$x\in W?$'' when $W\subseteq\cc^n$ is the union of $k$ varieties. Of
course, since the union is finite $W$ is also a variety, but the encoding is
not the same as above: now, $k$ sets of $s$ polynomials are given.

A first naive approach is to define $W=\cup_i V_i$ by the different products
of the polynomials defining the $V_i$, but it turns out that there are too
many products to be dealt with. Instead, we will adopt a divide-and-conquer
scheme as previously.
\begin{lemma}\label{lemma_union_two}
  There exists a polynomial $q(n,d)$ such that, if $V_1$ and $V_2$ are two
  varieties of $\cc^n$, each defined by $n+1$ polynomials in
  $\zz[x_1,\dots,x_n]$, respectively $f_1,\dots,f_{n+1}$ and
  $g_1,\dots,g_{n+1}$, of degree $\leq 2^d$ and of coefficients bounded by
  $2^{2^M}$ in absolute value, then:
  \begin{enumerate}
  \item the variety $V=V_1\cup V_2$ is defined by $n+1$ polynomials
    $h_1,\dots,h_{n+1}$ in $\zz[x_1,\dots,x_n]$ of degree $\leq 2^{d+1}$ and
    of coefficients bounded by $2^{2^{M+q(n,d)}}$ in absolute value;
  \item the coefficients of the $h_i$ are bitwise computable from
    those of the $f_j$ and $g_k$ in space $Mq(n,d)$.
  \end{enumerate}
\end{lemma}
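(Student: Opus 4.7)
The key algebraic observation is that $V = V_1 \cup V_2$ is defined by the $(n+1)^2$ products $f_j g_k$. Indeed, $\bar x \in V$ iff either every $f_j$ or every $g_k$ vanishes at $\bar x$, which by De~Morgan is equivalent to $f_j(\bar x) g_k(\bar x) = 0$ for every pair $(j,k)$. These products have degree $\leq 2 \cdot 2^d = 2^{d+1}$, and being sums of at most $(2^d+1)^n$ products of two integers each of absolute value $\leq 2^{2^M}$, their coefficients are bounded in absolute value by $2^{2^{M'}}$ with $M' = M + O(\log(nd))$. The plan is then to follow exactly the scheme of Lemma~\ref{lemma_split_two}, applied to these $(n+1)^2$ polynomials rather than to $2(n+1)$ polynomials, in order to extract $n+1$ integer linear combinations that still define $V$.

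Concretely, consider the formula
\[
\phi(\bar\alpha) \equiv \forall\bar x\in\cc^n\left(\bigwedge_{i=1}^{n+1}\sum_{j,k=1}^{n+1}\alpha_{i,j,k}f_j(\bar x)g_k(\bar x) = 0 \;\Leftrightarrow\; \bigwedge_{j,k=1}^{n+1}f_j(\bar x)g_k(\bar x) = 0\right)
\]
in the $(n+1)^3$ variables $\alpha_{i,j,k}$. By Lemma~\ref{lemma_variety} applied to the $(n+1)^2$ products, $\phi(\bar\alpha)$ holds on every tuple algebraically independent over $\qq$. Eliminate the quantifier via Theorem~\ref{th_elimination} to obtain an equivalent quantifier-free $\psi(\bar\alpha)$ whose polynomials have degree $<D$ and integer coefficients of absolute value $<C=2^{2^{M'}D}$, where $D = 2^{p(n,s,d+1)}$ and $s = O(\log n)$. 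Then apply Lemma~\ref{lemma_integers} with fast-growing integer powers of $2$, say $\beta_i = 2^{2^{M' + 2ir}}$ for a suitable polynomial $r=r(n,d)$, chosen exactly as in the proof of Lemma~\ref{lemma_split_two}. The polynomials
\[
h_i = \sum_{j,k=1}^{n+1}\beta_{i,j,k}\,f_j\,g_k
\]
then define $V$, have degree $\leq 2^{d+1}$, and have integer coefficients bounded in absolute value by $2^{2^{M+q(n,d)}}$ for a suitable polynomial $q$.

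For the space bound, each coefficient of $h_i$ is an iterated sum of at most $(n+1)^2$ products of the form $\beta_{i,j,k}\cdot(\text{coefficient of }f_j)\cdot(\text{coefficient of }g_k)$. Since each $\beta_{i,j,k}$ is a power of $2$, multiplying by it is just a bit shift; the product of two $2^M$-bit integers and the iterated addition of $\mathrm{poly}(n,d)$ many $2^{O(M+q(n,d))}$-bit integers are both computable in working space polynomial in the total input bit length, hence in space $\mathrm{poly}(M,n,d)$, which is absorbed into $Mq(n,d)$ after enlarging $q$ by a polynomial factor. The main obstacle is just the careful bookkeeping of coefficient blowups through the quantifier elimination and the substitution of fast-growing powers of $2$; the only real novelty compared with Lemma~\ref{lemma_split_two} is the upfront formation of the $(n+1)^2$ products, which introduces the polynomial-in-$(n,d)$ overhead in the $M$ parameter and the doubling of the degree bound.
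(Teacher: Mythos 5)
Your proof is correct, and the opening reduction (defining $V_1\cup V_2$ by the $(n+1)^2$ products $f_jg_k$, then bounding degree and coefficient size) is exactly what the paper does. Where you diverge is in the second step: the paper simply invokes Proposition~\ref{prop_membership} (with $2^s=(n+1)^2$) to replace the $(n+1)^2$ products by $n+1$ integer linear combinations, whereas you inline a one-shot version of the Lemma~\ref{lemma_split_two} machinery directly on the $(n+1)^2$ products, writing out the quantified formula $\phi(\bar\alpha)$ in $(n+1)^3$ variables, eliminating the quantifier, and then substituting fast-growing powers of~$2$ via Lemma~\ref{lemma_integers}. This works here because the number of polynomials $(n+1)^2$ --- and hence the index $r=(n+1)^3$ in Lemma~\ref{lemma_integers} --- is only polynomial, so the resulting $\beta_i$ still have bit size $2^{M+\mathrm{poly}(n,d)}$. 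Your route is self-contained but essentially reproves a special case of Proposition~\ref{prop_membership}; the paper's route is shorter because it reuses the already-established proposition and would scale to any $2^{\mathrm{poly}}$ many polynomials, which the direct one-shot version would not. A small imprecision in your space-complexity paragraph: a coefficient of $h_i$ is not a sum of merely $(n+1)^2$ products; for each fixed pair $(j,k)$ the coefficient of a monomial in $f_jg_k$ is itself a convolution sum over exponentially many (about $2^{n(d+1)}$) pairs of monomials. That number of additions is still fine in polynomial work space by keeping a running sum, which is what the paper relies on, but your bookkeeping should acknowledge it.
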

\begin{proof}
  The variety $V$ is defined by the $(n+1)^2$ polynomials $f_ig_j$ for $1\leq
  i,j\leq n+1$: these polynomials have degree $\leq 2^{d+1}$. Note moreover
  that there are at most $2^{n(d+1)}$ monomials of fixed degree $\delta\leq
  2^{d+1}$, therefore the coefficients of the $f_ig_j$ are a sum of at most
  $2^{n(d+1)}$ products of integers of encoding size $2^M$. Thus they are
  computable in space $O(Mnd)$ from those of the $f_i$ and $g_j$. This also
  shows that the coefficients of the products $f_ig_j$ are bounded in absolute
  value by $2^{n(d+1)}2^{2^{M+1}}\leq 2^{2^{M+1+n(d+1)}}$. Applying
  Proposition~\ref{prop_membership} now enables to conclude if we take
  $q(n,d)=1+n(d+1)+p(n,\log((n+1)^2),d+1)$, where $p$ is the polynomial
  defined in Proposition~\ref{prop_membership}.\qed
\end{proof}

The next proposition now follows by induction.
\begin{proposition}\label{prop_union2}
  There exists a polynomial $r(n,s,k,d)$ such that, if
  $V_1,\dots,V_{2^k}\subseteq\cc^n$ are $2^k$ varieties, $V_i$ being defined
  by $2^s$ polynomials $f^{(i)}_1,\dots,f^{(i)}_{2^s}\in\zz[x_1,\dots,x_n]$ of
  degree $\leq 2^d$ and of coefficients bounded by $2^{2^M}$ in absolute
  value, then:
  \begin{enumerate}
  \item the variety $V=\cup_{i=1}^{2^k} V_i$ is defined by $n+1$ polynomials
    $g_1,\dots,g_{n+1}$ in $\zz[x_1,\dots,x_n]$ of degree $\leq 2^{d+k}$ and
    whose coefficients are bounded in absolute value by
    $2^{2^{M+r(n,s,k,d)}}$;
  \item moreover, the coefficients of the $g_i$ are bitwise computable from
    those of the $f_{j'}^{(j)}$ in space $Mr(n,s,k,d)$.
  \end{enumerate}
\end{proposition}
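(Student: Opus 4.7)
The plan is to prove this by induction on $k$, mirroring the induction on $s$ used in Proposition~\ref{prop_membership} but now splitting a union of varieties rather than a conjunction of polynomial equations.

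For the base case $k=0$, there is a single variety $V=V_1$ defined by $2^s$ polynomials, so Proposition~\ref{prop_membership} immediately yields $n+1$ polynomials $g_1,\dots,g_{n+1}$ of degree $\leq 2^d$, with coefficients bounded by $2^{2^{M+p(n,s,d)}}$ and computable in space $Mp(n,s,d)$. It then suffices to require $r(n,s,0,d)\geq p(n,s,d)$.

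For the inductive step with $k\geq 1$, split the family into two halves of $2^{k-1}$ varieties, namely $W^{(1)}=V_1\cup\dots\cup V_{2^{k-1}}$ and $W^{(2)}=V_{2^{k-1}+1}\cup\dots\cup V_{2^k}$. Applying the induction hypothesis to each yields, for $l=1,2$, a set of $n+1$ polynomials defining $W^{(l)}$, of degree $\leq 2^{d+k-1}$, with coefficients bounded by $2^{2^{M+r(n,s,k-1,d)}}$ and bitwise computable in space $Mr(n,s,k-1,d)$. Then Lemma~\ref{lemma_union_two}, applied with effective parameters $d'=d+k-1$ and $M'=M+r(n,s,k-1,d)$, combines these two sets into $n+1$ polynomials defining $V=W^{(1)}\cup W^{(2)}$, of degree $\leq 2^{d+k}$ and with coefficients bounded by $2^{2^{M'+q(n,d')}}=2^{2^{M+r(n,s,k-1,d)+q(n,d+k-1)}}$. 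One then chooses $r(n,s,k,d)$ to be a polynomial dominating $r(n,s,k-1,d)+q(n,d+k-1)$ (for the coefficient bound) and also large enough for the space bookkeeping below; unfolding the recurrence, a bound of the form $r(n,s,k,d)=p(n,s,d)+k\cdot q(n,d+k)+c$ (up to a constant multiplier needed for the space accounting) is polynomial in $n,s,k,d$ because $q$ is polynomial and the number of levels is only $k$.

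The main obstacle, and the only delicate point, is the space accounting. The Lemma's statement promises space $M'q(n,d')$ \emph{given access to} the input coefficients, while here those inputs are themselves the output of a recursive call taking space $Mr(n,s,k-1,d)$. Since the recursion tree has depth $k$, a standard reuse-of-workspace argument (recompute intermediate coefficients on demand rather than storing them) shows that the total workspace is the sum over the $k$ levels of the space used per level, plus a polynomial stack overhead, which is absorbed by taking $r$ slightly larger. The key point is that the degree parameter $d+k$ stays polynomially bounded and so does $q(n,d+k)$, keeping $r(n,s,k,d)$ polynomial; this is what prevents the blow-up that naive iterated application of the quantifier elimination would produce and is exactly the reason for the divide-and-conquer organization flagged at the end of Section~\ref{section_variety}.
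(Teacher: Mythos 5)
Your proof is correct and follows essentially the same approach as the paper: induction on $k$, with Proposition~\ref{prop_membership} handling the base case and Lemma~\ref{lemma_union_two} combining pairs, keeping the degree increment to one per level so that after $k$ levels everything stays polynomially bounded. The only (cosmetic) difference is the shape of the recursion: you split the $2^k$ varieties into two halves and recurse on each, whereas the paper first reduces every $V_i$ to $n+1$ polynomials, pairs adjacent varieties into $W_i=V_{2i-1}\cup V_{2i}$, and makes a single recursive call on the resulting $2^{k-1}$ varieties; both organizations have depth $k$ and the same recompute-on-demand space accounting that you correctly flag as the delicate point.
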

\begin{proof}
  We proceed by induction on $k$. Define
  $r(n,s,k,d)=(k+1)(p(n,s,d+k)+q(n,d+k))$, where $p$ and $q$ are defined in
  Proposition~\ref{prop_membership} and Lemma~\ref{lemma_union_two}
  respectively.  The base case $k=0$ is merely an application of
  Proposition~\ref{prop_membership}. For $k>0$, we first apply
  Proposition~\ref{prop_membership} to the $V_i$, so that each variety $V_i$
  is now defined by $n+1$ polynomials of degree $\leq 2^d$ and whose
  coefficients are bounded in absolute value by $2^{2^{M+p(n,s,d)}}$ and
  computable in space $Mp(n,s,d)$. Let us group the varieties $V_i$ by pairs:
  call $W_i=V_{2i-1}\cup V_{2i}$ for $1\leq i\leq 2^{k-1}$.  There are
  $2^{k-1}$ varieties $W_i$ and we have $V=\cup_iW_i$. By
  Lemma~\ref{lemma_union_two}, each variety $W_i$ is defined by $n+1$
  polynomials of degree $\leq 2^{d+1}$, of coefficients of bitsize
  $2^{M+p(n,s,d)+q(n,d)}$ and bitwise computable in space
  $M(p(n,s,d)+q(n,d))$. By induction hypothesis at rank $k-1$, $V$ is defined
  by $n+1$ polynomials of degree $\leq 2^{d+1+(k-1)}$, of coefficients of
  bitsize
  $2^{M+p(n,s,d)+q(n,d)+k(p(n,\lceil\log(n+1)\rceil,d+k-1)+q(n,d+k-1))}\leq
  2^{M+r(n,s,k,d)}$ and bitwise computable in space $Mr(n,s,k,d)$. This proves
  the proposition.\qed
\end{proof}

Here is the main consequence on membership tests to a union of
varieties.
\begin{corollary}\label{cor_union}
  Let $p(n)$ and $q(n)$ be two polynomials. Suppose $(f_n(\bar x,\bar
  y,\bar z))$ is a \unif\vpspacezero\ family with $|\bar x| = n$, $|\bar y|
  = p(n)$ and $|\bar z| = q(n)$. For an integer $0\leq i<2^{p(n)}$, call
  $V_i^{(n)}\subseteq\cc^n$ the variety defined by the polynomials $f_n(\bar
  x, i, j)$ for $0\leq j<2^{q(n)}$ (in this notation, $i$ and $j$ are encoded
  in binary).

  Then there exists a \unif\vpspacezero\ family $g_n(\bar x,\bar y,\bar z)$,
  where $|\bar x| = n$, $|\bar y| = p(n)$ and $|\bar z| =
  \lceil\log(n+1)\rceil$, such that
  $$\forall \bar x\in\cc^n,\ \ \forall k < 2^{p(n)},\ \ \left(
  \bar x\in \bigcup_{i=0}^kV_i^{(n)}\iff \bigwedge_{j=0}^ng_n(\bar x, k,
  j)=0\right).$$
\end{corollary}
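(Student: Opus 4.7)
The plan is to apply Proposition~\ref{prop_union2} to the union $\bigcup_{i=0}^k V_i^{(n)}$ and then package the resulting $n+1$ defining polynomials into a single \unif\vpspacezero\ family parameterised by $(k,j)$. The geometric content of the corollary is entirely handed to us by the proposition; the work lies in (i) dealing with the fact that $k$ is an input, so the number of varieties in the union is not fixed in advance, and (ii) checking that the packaging really yields a \vpspace\ family.

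To dispose of (i), I would pad: replace $\bigcup_{i=0}^k V_i^{(n)}$ by $\bigcup_{i=0}^{2^{p(n)}-1} W_i^{(n)}$, where $W_i^{(n)} := V_{\min(i,k)}^{(n)}$. The two unions coincide as sets, and each $W_i^{(n)}$ is defined by the $2^{q(n)}$ polynomials $f_n(\bar x, \min(i,k), j')$. Since the map $(i,k)\mapsto \min(i,k)$ is bitwise computable in polynomial time, the family $\tilde f_n(\bar x,\bar y',\bar y,\bar z) := f_n(\bar x,\min(\bar y',\bar y),\bar z)$ is still in \unif\vpspacezero. Applying Proposition~\ref{prop_union2} with $s:=q(n)$, its ``$k$'' set to $p(n)$, and the degree/coefficient parameters $d,M$ inherited from $f_n$, I obtain, for each $k$, polynomials $h_1^{(k)},\dots,h_{n+1}^{(k)}\in\zz[\bar x]$ of degree at most $2^{\poly(n)}$ and coefficient absolute value at most $2^{2^{\poly(n)}}$, whose common zero set is exactly $\bigcup_{i=0}^k V_i^{(n)}$, and whose coefficient bits are computable in space $\poly(n)$ from $(n,\alpha,k,j)$.

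Finally, I would package these into a single family by multilinear interpolation:
$$g_n(\bar x,\bar y,\bar z) \;=\; \sum_{\kappa\in\{0,1\}^{p(n)}}\;\sum_{j=0}^{n} h_{j+1}^{(\kappa)}(\bar x)\,\chi_\kappa(\bar y)\,\chi_j(\bar z),$$
where $\chi_\kappa(\bar y):=\prod_i y_i^{\kappa_i}(1-y_i)^{1-\kappa_i}$ is the Boolean indicator polynomial for $\bar y=\kappa$, and similarly $\chi_j(\bar z)$ on $\lceil\log(n+1)\rceil$ bits (setting $h_{j+1}^{(\kappa)}:=0$ for $j>n$ to fill out the indicator sum). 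By construction $g_n(\bar x,k,j)=h_{j+1}^{(k)}(\bar x)$ on the binary encodings of $k\in\{0,\dots,2^{p(n)}-1\}$ and $j\in\{0,\dots,n\}$, so the corollary's equivalence reduces immediately to the defining property of the $h_j^{(k)}$. The five \unif\vpspacezero\ requirements on $g_n$ are then routine: $\poly(n)$ many variables, integer coefficients, the degree and coefficient-size bounds inherit from the $h_j^{(\kappa)}$ (the $\chi$ factors contribute only $\pm 1$ to each coefficient of $g_n$), and any bit of a coefficient of $g_n$ is a sum over $2^{\poly(n)}$ terms, each a \pspace\ query to the coefficient function of some $h_{j+1}^{(\kappa)}$. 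The main point to verify carefully is really this last \pspace\ closure: one must check that the recursion inside Proposition~\ref{prop_union2} is in fact bitwise \pspace-computable jointly in $(n,\alpha,\kappa,j,\text{bit index})$, not merely for each fixed $k$.
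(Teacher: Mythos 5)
Your proposal is correct and follows essentially the same route as the paper: both reduce the corollary to Proposition~\ref{prop_union2} and observe that, because the degree, coefficient size, $s$ and the log-number of varieties are all polynomially bounded for a \unif\vpspacezero\ family, the working space from that proposition stays polynomial. The paper's own proof is a one-sentence remark that leaves implicit exactly the two points you chose to make explicit, namely padding the union to $2^{p(n)}$ varieties via $V_{\min(i,k)}$ so that Proposition~\ref{prop_union2} applies uniformly for every input index $k$, and packaging the resulting $(n+1)$-tuples $h^{(k)}_1,\dots,h^{(k)}_{n+1}$ into a single polynomial $g_n(\bar x,\bar y,\bar z)$ by multilinear interpolation over the binary encodings of $k$ and $j$. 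Your closing caveat about checking that the recursion of Proposition~\ref{prop_union2} is bitwise \pspace-computable jointly in all parameters is indeed the right thing to verify, and it does hold: the space bound $Mr(n,s,k,d)$ in that proposition is uniform in the index of the variety, and bit-extraction from an exponential sum of exponentially long integers is the same column-by-column trick already used in the proof of Lemma~\ref{lemma_union_two}.
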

\begin{proof}
  If $(f_n)$ is a \unif\vpspacezero\ family, by definition there exists a
  polynomial $p(n)$ such that the degree of $f_n$ is bounded by $2^{p(n)}$ and
  the absolute value of the coefficients by $2^{2^{p(n)}}$. Therefore $d$,
  $M$, $s$ and $k$ are polynomially bounded in Proposition~\ref{prop_union2}
  and the space needed to compute the coefficients of $g_n$ is polynomial.\qed
\end{proof}

\section{Proof of the Main Theorem}\label{sec_proof}

Sign conditions are the main ingredient of the proof.
Over \cc, we define the ``sign'' of $a\in\cc$ by 0 if $a=0$ and 1
otherwise. Let us fix a family of polynomials
$f_1,\ldots,f_s\in\zz[x_1,\ldots,x_n]$. A \emph{sign condition} is an element
$S\in\{0,1\}^s$. Hence there are $2^s$ sign conditions. Intuitively, the
$i$-th component of a sign condition determines the sign of the polynomial
$f_i$.

\subsection{Satisfiable Sign Conditions}

The sign condition of a point $\bar x\in\cc^n$ is the tuple $S^{\bar
  x}\in\{0,1\}^s$ defined by $S_i^{\bar x}=0\iff f_i(\bar x)=0$. We say that a
sign condition is \emph{satisfiable} if it is the sign condition of some
$\bar x\in\cc^n$. As 0-1 tuples, sign conditions can be viewed as subsets of
$\{1,\ldots,s\}$.  Using a fast parallel sorting algorithm (e.g. Cole's,
\cite{cole1988}), we can sort \ssc s in polylogarithmic parallel time in a way
compatible with set inclusion (e.g. the lexicographic order). We now fix such
a compatible linear order on sign conditions and consider our \ssc s
$S^{(1)}<S^{(2)}<\dots<S^{(N)}$ sorted accordingly.\label{linear_order}

The key point resides in the following theorem, coming from the algorithm
of~\cite{FGM1990}: there is a ``small'' number of \ssc s and enumerating them
is ``easy''.
\begin{theorem}\label{th_sc}
  Let $f_1,\ldots,f_s\in\zz[x_1,\ldots,x_n]$ and $d$ be their maximal
  degree. Then the number of \ssc s is $N=(sd)^{O(n)}$, and there is a
  uniform algorithm working in space $\bigl(n\log(sd)\bigr)^{O(1)}$ which, on
  boolean input $f_1,\ldots,f_s$ (in dense representation) and $(i,j)$ in
  binary, returns the $j$-th component of the $i$-th \ssc.
\end{theorem}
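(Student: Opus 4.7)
The plan is to invoke the quantifier-elimination algorithm of~\cite{FGM1990}, which decides first-order formulas over $\cstruct$ in parallel polynomial time and therefore in polynomial working space; both claims of the theorem reduce to applications of that result.

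For the cardinality bound $N=(sd)^{O(n)}$, each satisfiable sign condition $S$ indexes a non-empty constructible set
$$C_S=\{\bar x\in\cc^n : f_i(\bar x)=0\mbox{ iff }S_i=0\},$$
and the family $(C_S)_S$ stratifies $\cc^n$. A B\'ezout/Heintz-type argument --- already implicit in the output-size analysis of~\cite{FGM1990} --- bounds the number of non-empty $C_S$ by $(sd)^{O(n)}$.

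For the enumeration algorithm, the key step is that~\cite{FGM1990} produces, in parallel polynomial time, for each satisfiable sign condition a sample point $\bar x^{(i)}\in C_{S^{(i)}}$ encoded by $\poly(n\log(sd))$ bits (e.g.\ as algebraic coordinates given by a triangular system). These sample points can then be compared in the fixed order on SSCs --- compatible with set inclusion, for instance lexicographic with $0<1$ --- by evaluating their sign patterns position by position, and sorted in polylogarithmic parallel time via Cole's algorithm~\cite{cole1988}. To answer a query $(i,j)$, one selects the $i$-th sample point $\bar x^{(i)}$ in sorted order and outputs the sign of $f_j(\bar x^{(i)})$, which is read from the input tape and evaluated at $\bar x^{(i)}$.

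The principal obstacle is to respect the space bound $(n\log(sd))^{O(1)}$ when $s$ may be exponential in $n\log(sd)$, so that no full $s$-bit sign string can ever fit in working memory. The resolution is that sign conditions are stored only implicitly, through their short sample points: each requested bit $S^{(i)}_j$ is produced on demand by a single evaluation of $f_j$ at $\bar x^{(i)}$, and each comparison in the sorting step is likewise computed bit by bit with only a polylogarithmic counter in memory. Since every FGM call, every comparison, and every sign evaluation individually stays within the space budget and these are composed sequentially, the overall algorithm lies in space $(n\log(sd))^{O(1)}$.
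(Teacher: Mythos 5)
The paper does not actually prove Theorem~\ref{th_sc}: it states it as a consequence of the parallel quantifier-elimination algorithm of~\cite{FGM1990} (together with the parallel sort fixed on page~\pageref{linear_order}), so what should be checked is whether your argument really yields the claimed space bound. Your overall plan is the right one --- FGM plus a B\'ezout bound for $N=(sd)^{O(n)}$ plus rank selection --- but there is a quantitative gap in the space accounting. You claim that each sample point $\bar x^{(i)}$ can be ``encoded by $\poly(n\log(sd))$ bits (e.g.\ as algebraic coordinates given by a triangular system)''. This is off by an exponential: the algebraic numbers produced by FGM/Renegar-style elimination live in extensions of degree $(sd)^{O(n)}$, with coefficients of bitsize $(sd)^{O(n)}$, so a single sample point occupies $(sd)^{O(n)} = 2^{O(n\log(sd))}$ bits, which already blows the $(n\log(sd))^{O(1)}$ budget. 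Consequently the step ``one selects the $i$-th sample point $\bar x^{(i)}$ \dots and outputs the sign of $f_j(\bar x^{(i)})$'' cannot be carried out as written, since the machine cannot even hold one sample point in working memory, let alone evaluate $f_j$ at it.

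The correct way to stay inside the space budget is not to materialise sample points (or sign conditions, or sample-point coordinates) at all, but to use the parallel-time-to-space translation directly: the FGM algorithm runs in uniform parallel time $(n\log(sd))^{O(1)}$, hence can be simulated by a Turing machine using working space $(n\log(sd))^{O(1)}$ that produces a (long) output stream of sign-condition bits, any single bit of which can be recovered on demand by re-simulation. Given this bit-oracle, the $i$-th sign condition \emph{in the sorted order} is found by rank counting: for each index $k$ in the FGM enumeration, compare its sign condition to every other one bit by bit (each bit fetched by re-simulation), count how many precede it, and stop when the count equals $i-1$; the loop counters are $O(\log N + \log s) = (n\log(sd))^{O(1)}$ bits, and re-simulation reuses space, so the nesting is harmless. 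This also shows that the reference to Cole's parallel sort is a red herring in your write-up: what is needed sequentially is selection by counting, not an explicit sort, and in the parallel picture the sort is simply absorbed into the overall polynomial-depth circuit whose depth-to-space translation gives the bound. Your B\'ezout-type bound on $N$ and the observation that only sign comparisons (not full $s$-bit strings) need to be held in memory are both fine and consistent with what the paper intends.
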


When $\log(sd)$ is polynomial in $n$, as will be the case, this yields a
\pspace\ algorithm. If furthermore the coefficients of $f_i$ are computable in
polynomial space, we will then be able to use the \ssc s in the
coefficients of \vpspace\ families, as in Lemma~\ref{cor_slice} below.

Let us explain why we are interested in sign conditions. An arithmetic circuit
performs tests of the form $f(\bar x)=0$ on input $\bar x\in\cc^n$, where $f$
is a polynomial. Suppose $f_1,\ldots,f_s$ is the list of all polynomials that
can be tested in \emph{any possible} computation. Then two elements of $\cc^n$
with the same sign condition are simultaneously accepted or rejected by the
circuit: the results of the tests are indeed always the same for both
elements.

Thus, instead of finding out whether $\bar x\in\cc^n$ is accepted by the
circuit, it is enough to find out whether the sign condition of $\bar x$ is
accepted. The advantage resides in handling only boolean tuples (the sign
conditions) instead of complex numbers (the input $\bar x$). But we have to be
able to find the sign condition of the input $\bar x$. This requires first the
enumeration of all the polynomials possibly tested in any computation of the
circuit.

\subsection{Enumerating all Possibly Tested Polynomials}

In the execution of an algebraic circuit, the values of some polynomials at
the input $\bar x$ are tested to zero.
In order to find the sign condition of the input $\bar x$, we have
to be able to enumerate in polynomial space all the polynomials that can ever
be tested to zero in the computations of an algebraic circuit. This is done
level by level as in~\cite[Th.~3]{CK1997} and~\cite{KP2006}.

\begin{proposition}\label{prop_slice}
  Let $C$ be a constant-free algebraic circuit with $n$ variables and of depth
  $d$.
  \begin{enumerate}
  \item The number of different polynomials possibly tested to zero in the
    computations of $C$ is $2^{d^2O(n)}$.
  \item There exists an algorithm using work space $(nd)^{O(1)}$ which, on
    input $C$ and integers $(i,j)$ in binary, outputs the $j$-th bit of the
    representation of the $i$-th polynomial.
  \end{enumerate}
\end{proposition}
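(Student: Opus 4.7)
The approach is a level-by-level induction on the depth $\ell$. Let $P_\ell$ denote the set of all polynomials that can appear as the value of some gate of depth $\ell$ of $C$, ranging over all possible assignments of outcomes to the test gates at depths strictly less than $\ell$. The base case $\ell = 0$ gives $P_0 \subseteq \{x_1, \dots, x_n, 1\}$ since input gates compute only variables or the constant $1$. For the inductive step, a gate at depth $\ell$ is either an arithmetic gate, whose value is of the form $p \star q$ with $p, q \in \bigcup_{i < \ell} P_i$ and $\star \in \{+, -, \times\}$, or a test gate, whose value is $0$ or $1$ (and whose tested polynomial already lies in $\bigcup_{i < \ell} P_i$). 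This simultaneously produces the enumeration of the polynomials possibly tested by $C$ and makes precise the inductive structure of $P_\ell$.

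For the size bound, I would associate with each polynomial $p$ in $P_\ell$ a canonical description consisting of the gate of $C$ at which $p$ is produced together with the assignment of test outcomes at the ancestor gates that actually influence the value computed there. Using the degree bound $\deg p \leq 2^\ell$, the fact that the identifier of any gate can be encoded in $O(nd)$ bits, and a careful accounting showing that at most $O(nd)$ independent test outcomes affect any single gate's value at each level, I obtain $|P_\ell| \leq 2^{O(n\ell)}$. Summing over levels $\ell \leq d$ yields the overall bound $|P_{\leq d}| \leq 2^{O(nd^2)} = 2^{d^2 O(n)}$ claimed in item (1).

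For the algorithm of item (2), on input $C$ and $(i, j)$ in binary, I decode $i$ as the canonical description of the $i$-th polynomial, which can be generated in space $(nd)^{O(1)}$ by enumerating canonical descriptions in lexicographic order. Once the description is in hand, the corresponding polynomial is computed by simulating the arithmetic dictated by the description: at each step one manipulates polynomials of degree at most $2^d$, but the coefficient to be output can be produced bit-by-bit on demand using standard space-efficient arithmetic, so that all intermediate data fit in space $(nd)^{O(1)}$. The $j$-th bit of the resulting representation is then returned.

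The main obstacle is the size bound $2^{d^2 O(n)}$ in item (1): the naive recurrence $|P_\ell| = O(|P_{<\ell}|^2)$ coming from the inductive step above yields a doubly exponential blow-up in $d$, so one must instead exploit the fixed DAG structure of $C$ together with a carefully chosen canonical representation of polynomials that avoids counting the same polynomial under multiple equivalent assignments of test outcomes. Once this representation is pinned down, both the enumeration and the space-efficient computation of item (2) follow in a routine way.
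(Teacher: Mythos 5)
There is a genuine gap exactly where you yourself point at it. You correctly observe that the naive recurrence $|P_\ell| = O(|P_{<\ell}|^2)$ gives a doubly exponential blow-up in $d$, but the fix you propose --- ``exploit the fixed DAG structure of $C$ together with a carefully chosen canonical representation'' and ``at most $O(nd)$ independent test outcomes affect any single gate's value'' --- is not justified and, as far as I can see, is not true in the combinatorial sense you are appealing to. A gate at depth $\ell$ can have on the order of $2^{\ell}$ test-gate ancestors, and nothing purely structural about the DAG prevents all $2^{2^\ell}$ outcome vectors from occurring; the DAG by itself gives no ``$O(nd)$ independent tests'' bound.

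The control actually comes from algebraic geometry, not from the circuit's wiring. The number of test-outcome vectors that are realized by some $\bar x \in \cc^n$ is at most the number of satisfiable sign conditions of the polynomials tested at lower levels, and over $\cc$ this number is $(sD)^{O(n)}$ for $s$ polynomials of degree $\leq D$ in $n$ variables (this is precisely the content of Theorem~\ref{th_sc}, which the paper draws from \cite{FGM1990}; the level-by-level argument is the one attributed to \cite{CK1997} and \cite{KP2006}). Plugging in $D \leq 2^\ell$ and iterating this bound over the $d$ levels gives the $2^{d^2 O(n)}$ count, and the same sign-condition enumeration underlies the $(nd)^{O(1)}$-space algorithm of item~(2). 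Your proof never invokes any bound of this type, and without it the claimed $|P_\ell| \leq 2^{O(n\ell)}$ is unsupported; consequently the enumeration and the space-bounded algorithm of item~(2), which you say ``follow in a routine way once the representation is pinned down,'' are also left without a foundation.
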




Together with Theorem~\ref{th_sc}, this enables us to prove the following
result which will be useful in the proof of Proposition~\ref{prop_rank}: in
\unif\vpspacezero\ we can enumerate the polynomials as well as the \ssc s.

\begin{lemma}\label{cor_slice}
  Let $(C_n)$ be a uniform family of polynomial-depth algebraic circuits with
  polynomially many inputs. Call $d(n)$ the depth of $C_n$ and $i(n)$ the
  number of inputs. Let $f^{(n)}_1,\dots,f^{(n)}_s$ be all the polynomials
  possibly tested to zero by $C_n$ as in Proposition~\ref{prop_slice}, where
  $s=2^{O(nd(n)^2)}$. There are therefore $N=2^{O(n^2d(n)^2)}$ \ssc s
  $S^{(1)},\dots,S^{(N)}$ by Theorem~\ref{th_sc}.

  Then there exists a \unif\vpspacezero\ family $(g_n(\bar x,\bar y,\bar z))$,
  where $|\bar x| = i(n)$, $|\bar y| = O(n^2d(n)^2)$ and $|\bar z| =
  O(nd(n)^2)$, such that for all $1\leq i\leq N$ and $1\leq j\leq s$, we have:
  $$g_n(\bar x,i,j)=\left\{\begin{array}{ll}
      0 & \mbox{if } S^{(i)}_j=1\\
      f^{(n)}_j(\bar x) & \mbox{otherwise.}\end{array}\right.$$
\end{lemma}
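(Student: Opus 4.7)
The plan is to invoke the characterisation $\unif\vpspacezero = \unif\vparzero$ of Proposition~\ref{prop_alternate} and exhibit a \pspace-uniform, constant-free arithmetic circuit family of polynomial depth (and exponential size) computing $g_n$. The explicit candidate uses the multilinear \emph{indicator polynomial}
\[
e_\mu(\bar u) \;:=\; \prod_{k} u_k^{\mu_k}(1-u_k)^{1-\mu_k},
\]
which evaluates to $1$ at the boolean point $\mu$ and to $0$ at every other boolean point of the appropriate hypercube. Identifying $i$ and $j$ with their binary encodings in $\bar y$ and $\bar z$, I would take
\[
g_n(\bar x,\bar y,\bar z) \;=\; \sum_{i=1}^{N}\sum_{j=1}^{s}\bigl(1-S^{(i)}_j\bigr)\,f^{(n)}_j(\bar x)\,e_i(\bar y)\,e_j(\bar z).
\]
At any boolean $(\bar y,\bar z)$ encoding $(i_0,j_0)$ every term of the double sum vanishes except the one with $(i,j)=(i_0,j_0)$, which evaluates to $(1-S^{(i_0)}_{j_0})\,f^{(n)}_{j_0}(\bar x)$: this equals $0$ when $S^{(i_0)}_{j_0}=1$ and $f^{(n)}_{j_0}(\bar x)$ otherwise, as required.

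To place $g_n$ in $\unif\vparzero$, I would assemble the circuit layer by layer. Each $f^{(n)}_j(\bar x)$ is rebuilt as $\sum_\alpha c^{(j)}_\alpha \bar x^\alpha$: monomials $\bar x^\alpha$ of degree at most $2^{\poly(n)}$ are produced by repeated squaring in depth $O(\poly(n))$; each coefficient $c^{(j)}_\alpha$ of bit-length $2^{\poly(n)}$ is built from the constant $1$ by a logarithmic-depth doubling-and-addition scheme driven by the bits supplied by Proposition~\ref{prop_slice}; and the $2^{\poly(n)}$ resulting terms are combined through a balanced $+$-tree of polynomial depth. The bits $S^{(i)}_j$ are produced by Theorem~\ref{th_sc} and inserted as $0$ or $1$ constants. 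The indicator subcircuits $e_i(\bar y)$ and $e_j(\bar z)$ are balanced products of linear factors of logarithmic depth. Finally, the outer sum of $Ns = 2^{O(n^2 d(n)^2)}$ terms is realised by a balanced $+$-tree of depth $O(n^2 d(n)^2)$. Each level has polynomial depth, so the total depth is polynomial while the total size is exponential.

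The remaining point is \pspace-uniformity. Each of the exponentially many gates is addressable by a tuple of polynomially many bits, and emitting a gate's type together with the addresses of its predecessors amounts to routine arithmetic on those indices combined with calls to the two \pspace\ procedures of Proposition~\ref{prop_slice} and Theorem~\ref{th_sc}. I expect the main obstacle to be precisely this bookkeeping: organising the nested decomposition (outer sum, per-term products, coefficient reconstruction, monomial building) so that any gate address can be parsed in polynomial space and so that one single polynomial in $n$ bounds the depth of every sub-block. Once this is settled, Proposition~\ref{prop_alternate} yields $g_n \in \unif\vpspacezero$ and the lemma follows.
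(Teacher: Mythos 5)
Your interpolation formula
\[
g_n(\bar x,\bar y,\bar z) \;=\; \sum_{i=1}^{N}\sum_{j=1}^{s}\bigl(1-S^{(i)}_j\bigr)\,f^{(n)}_j(\bar x)\,e_i(\bar y)\,e_j(\bar z)
\]
is correct and gives exactly the required behaviour at boolean $(\bar y,\bar z)$; the paper itself offers no written proof of this lemma, treating it as routine, so there is no direct comparison to make. Your route through $\unif\vparzero$ and Proposition~\ref{prop_alternate} is sound, but note that it is arguably a detour: once you have the explicit formula, you can read off directly that $g_n$ satisfies the five conditions in the definition of $\unif\vpspacezero$. Indeed the coefficient of $\bar x^\alpha\bar y^\beta\bar z^\gamma$ is $\sum_{i,j}(1-S^{(i)}_j)\,a_j(\alpha)\,\epsilon_{i,\beta}\,\epsilon_{j,\gamma}$, where $a_j(\alpha)$ is the coefficient of $\bar x^\alpha$ in $f^{(n)}_j$ (bits available in polynomial space by Proposition~\ref{prop_slice}), $\epsilon_{i,\beta}=[\bar y^\beta]e_i$ and $\epsilon_{j,\gamma}=[\bar z^\gamma]e_j$ are in $\{0,\pm 1\}$ and trivial to compute, and $S^{(i)}_j$ is given by Theorem~\ref{th_sc}; a signed sum of $2^{\poly(n)}$ integers of $2^{\poly(n)}$ bits is standard \pspace\ arithmetic, so the coefficient function is in \pspace. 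This bypasses entirely the circuit construction and the associated gate-addressing bookkeeping that you correctly flag as the delicate part of your approach. Either way the lemma holds; the direct coefficient-function check is just shorter and stays closer to the definition.
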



\subsection{Finding the Sign Condition of the Input}

In order to find the sign condition $S^{\bar x}$ of the input $\bar
x\in\cc^n$, we will give a polynomial-time algorithm which tests some
\vpspace\ family for zero. Here is the formalized notion of a polynomial-time
algorithm with \vpspace\ tests.
\begin{definition}\label{def_algo_tests}
  A polynomial-time algorithm with \unif \vpspacezero\ tests is a \unif
  \vpspacezero\ family $(f_n(x_1,\dots,x_{u(n)}))$ together with a uniform
  family $(C_n)$ of constant-free polynomial-size algebraic circuits endowed
  with special test gates of indegree $u(n)$, whose value is $1$ on input
  $(a_1,\dots,a_{u(n)})$ if $f_n(a_1,\dots,a_{u(n)})=0$ and $0$ otherwise.
\end{definition}
Observe that a constant number of \unif\vpspacezero\ families can be used in the
preceding definition instead of only one: it is enough to combine them all in
one by using ``selection variables''. 

The precise result we show now is the following. By the ``rank'' of a
satisfiable sign condition, we merely mean its index in the fixed order on
satisfiable sign conditions.
\begin{proposition}\label{prop_rank}
  Let $(C_n)$ be a uniform family of algebraic circuits of polynomial depth
  and with a polynomial number $i(n)$ of inputs. There exists a
  polynomial-time algorithm with \unif\vpspacezero\ tests which, on input
  $\bar x\in\cc^{i(n)}$, returns the rank $i$ of the sign condition $S^{(i)}$ of
  $\bar x$ with respect to the polynomials $g_1,\dots,g_s$ tested to zero by
  $C_n$ given by Proposition~\ref{prop_slice}.
\end{proposition}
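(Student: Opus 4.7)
The plan is to recover $\text{rank}(S^{\bar x})$ by binary search, exploiting subset-compatibility of the fixed linear order on satisfiable sign conditions. Let $g_1,\dots,g_s$ be the polynomials of Proposition~\ref{prop_slice}, and for each $1\le i\le N$ introduce the variety
\[
V_i := \{\bar x\in\cc^{i(n)} : g_j(\bar x)=0 \text{ for every } j \text{ with } S^{(i)}_j=0\} = \{\bar x : S^{\bar x}\subseteq S^{(i)}\},
\]
where inclusion is understood bitwise on characteristic vectors. The central claim I would establish is
\[
\text{rank}(S^{\bar x})\le k \;\Longleftrightarrow\; \bar x\in\bigcup_{i=1}^{k} V_i.
\]
The $(\Leftarrow)$ direction is exactly subset-compatibility: $\bar x\in V_i$ forces $S^{\bar x}\subseteq S^{(i)}$, hence $\text{rank}(S^{\bar x})\le i\le k$. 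The $(\Rightarrow)$ direction is immediate, since $\bar x\in V_{\text{rank}(S^{\bar x})}$ always holds.

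To make the test effective, I would take the \unif\vpspacezero\ family $g_n(\bar x,i,j)$ produced by Lemma~\ref{cor_slice}: by construction its common zeros as $j$ varies (for fixed $i$) are exactly $V_i$, since $g_n(\bar x,i,j)=g_j(\bar x)$ when $S^{(i)}_j=0$ and $0$ otherwise. Feeding this family into Corollary~\ref{cor_union}, with $\bar y=i$ on the $O(n^2d(n)^2)$ bits indexing \ssc s and $\bar z=j$ on the $O(nd(n)^2)$ bits indexing polynomials (both polynomially bounded), yields a \unif\vpspacezero\ family $h_n(\bar x,k,j')$ for which
\[
\bar x\in\bigcup_{i=1}^{k} V_i \;\Longleftrightarrow\; h_n(\bar x,k,j')=0 \text{ for every } 0\le j'\le i(n).
\]
Combined with the central equivalence, the predicate ``$\text{rank}(S^{\bar x})\le k$?'' reduces to $i(n)+1$ zero-tests of the single family $h_n$.

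The algorithm is then a standard binary search on $k\in\{1,\dots,N\}$: at each of its $\lceil\log N\rceil$ iterations it evaluates the $i(n)+1$ test gates prescribed by Definition~\ref{def_algo_tests} on $h_n(\bar x,k,\cdot)$ and branches on whether they all vanish, returning the least $k$ that passes. Since $\log N=O(n^2d(n)^2)$ is polynomial in $n$, the controlling algebraic circuit is of polynomial size and issues polynomially many \vpspacezero\ tests, as required. The one conceptual hurdle is the central equivalence, which is exactly what subset-compatibility of the order buys: without it, a rank comparison could not be localised to membership in a union of closed sets. Everything beyond that equivalence is routine composition of the machinery of Section~\ref{membership} with Lemma~\ref{cor_slice}, made possible by the polynomial bounds on all parameter sizes.
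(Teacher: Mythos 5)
Your proposal is correct and follows essentially the same approach as the paper's proof: defining the varieties $V_i$ from the satisfiable sign conditions, observing that $\bar x\in V_i$ precisely when $S^{\bar x}\subseteq S^{(i)}$ (so compatibility of the fixed linear order with inclusion makes $\mathrm{rank}(S^{\bar x})$ the least $k$ with $\bar x\in\bigcup_{i\le k}V_i$), and then running a binary search whose tests are supplied by Lemma~\ref{cor_slice} and Corollary~\ref{cor_union}. The only cosmetic difference is that the paper phrases the key fact via ``candidates'' (sign conditions $S^{(j)}$ with $\bar x\in V_j$, of which $S^{\bar x}$ is the smallest), whereas you state the equivalent biconditional on ranks directly.
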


\begin{proof}
  Take the \unif\vpspacezero\ family $(g_n(\bar x,\bar y,\bar z))$ as in
  Lemma~\ref{cor_slice}: in essence, $g_n$ enumerates all the polynomials
  $f_1,\dots,f_s$ possibly tested to zero in $C_n$ and enumerates the $N$ \ssc
  s $S^{(1)}<\dots<S^{(N)}$. The idea now is to perform a binary search in
  order to find the rank $i$ of the sign condition of the input $\bar x$.

  Let $S^{(j)}\in\{0,1\}^s$ be a satisfiable sign condition. We say that
  $S^{(j)}$ is a \emph{candidate} whenever $\forall m\leq s$,
  $S^{(j)}_m=0\Rightarrow f_m(\bar x) = 0$. Remark that the sign condition of
  $\bar x$ is the smallest candidate. Call $V_j$ the variety defined by the
  polynomials $\{f_m | S^{(j)}_m=0\}$: by definition of $g_n$, $V_j$ is also
  defined by the polynomials $g_n(\bar x,j,k)$ for $k=1$ to $s$. Note that
  $S^{(j)}$ is a candidate if and only if $\bar x\in V_j$.

  Corollary~\ref{cor_union} combined with Lemma~\ref{cor_slice} asserts that
  tests of the form $\bar x\in \cup_{k\leq j}V_k$ are in
  \unif\vpspacezero. They are used to perform a binary search by making $j$
  vary. In a number of steps logarithmic in $N$ (i.e. polynomial in $n$), we
  find the rank $i$ of the sign condition of $\bar x$.
\qed
\end{proof}

\subsection{A Polynomial-time Algorithm for PAR$_{\cc}$ Problems}

\begin{lemma}\label{lem_accept}
  Let $(C_n)$ be a uniform family of constant-free polynomial-depth algebraic
  circuits. There is a (boolean) algorithm using work space polynomial in $n$
  which, on input $i$, decides whether the elements of the $i$-th satisfiable
  sign condition $S^{(i)}$ are accepted by the circuit $C_n$.
\end{lemma}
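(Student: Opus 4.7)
The plan is to symbolically simulate $C_n$ on a generic input $\bar x$ with sign condition $S^{(i)}$. The key observation is that any two inputs with the same sign condition induce identical outcomes at every test gate of $C_n$: each test gate tests the vanishing of one of the polynomials $f_1,\dots,f_s$ enumerated by Proposition~\ref{prop_slice}, and whether $f_j(\bar x)=0$ is recorded by the bit $S^{(i)}_j$. Hence the boolean output of $C_n$ is a function of $S^{(i)}$ alone, and the task reduces to computing this function in polynomial work space.

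The algorithm traverses $C_n$ in a depth-first manner and computes, for each gate $g$, a polynomial-size representation $r(g)$ of the polynomial $p_g\in\zz[\bar x]$ that $g$ computes under the test outcomes prescribed by $S^{(i)}$. Such representations are supplied by the level-by-level construction underlying Proposition~\ref{prop_slice} (as in~\cite{CK1997} and~\cite{KP2006}): essentially, $r(g)$ records the gate $g$ together with the sequence of test-gate outcomes on the sub-circuit leading to $g$. At an input gate $r(g)$ is trivial; at an arithmetic gate $g$ combining $g_1,g_2$ via an operation in $\{+,-,\times\}$, $r(g)$ is assembled from $r(g_1)$, $r(g_2)$ and the operation; at a test gate with input $g'$ one identifies the index $j$ of $p_{g'}$ in the enumeration of $f_1,\dots,f_s$, and the boolean output is $1-S^{(i)}_j$ since the test returns $1$ iff $f_j$ vanishes iff $S^{(i)}_j=0$. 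Throughout, bits of $S^{(i)}$ are accessed in polynomial space via Theorem~\ref{th_sc} (using that $s=2^{O(nd(n)^2)}$ with $d(n)$ polynomial, so $\log(sd)$ is polynomial in $n$), and bits of the $f_j$ in polynomial space via Proposition~\ref{prop_slice}. The output bit of $C_n$ is the answer.

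The space bound holds because the recursion depth equals the polynomial depth of $C_n$, each $r(g)$ has polynomial bit-length (bounded by the number of test gates in any sub-circuit of $C_n$, which is polynomial), and all bit-accesses are in polynomial space. The principal technical point is to verify that the representations $r(g)$ really do compose at arithmetic gates and can be matched against the enumeration of the $f_j$'s at test gates within polynomial space; this is ensured by, but requires unpacking, the level-by-level construction of~\cite{CK1997,KP2006}. Once that is in place, the depth-first procedure above yields the desired polynomial-space boolean algorithm.
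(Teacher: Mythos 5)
Your overall strategy matches the paper's: simulate $C_n$ on an input that is known only through its sign condition $S^{(i)}$, use Proposition~\ref{prop_slice} to identify which polynomial $f_j$ is tested at each test gate, read the bit $S^{(i)}_j$ via the algorithm of Theorem~\ref{th_sc}, and propagate that boolean outcome. The paper presents the traversal ``level by level'' while you go depth-first; this is a presentational difference, not a different method, and both rely on the same two ingredients.

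However, there is a wrong claim in your space accounting. You justify that $r(g)$ has polynomial bit-length because it is ``bounded by the number of test gates in any sub-circuit of $C_n$, which is polynomial.'' This is false in the present setting: the circuits in question have polynomial \emph{depth} but may have exponential \emph{size} (that is the whole point of $\parc$), so the cone of a gate $g$ may well contain exponentially many test gates. Recording the sequence of test outcomes over that cone therefore does not yield a polynomial-length representation. The correct justification is to take $r(g)$ to be an \emph{index} into the enumeration of Proposition~\ref{prop_slice}: part~1 of that proposition says the number of distinct testable polynomials is $2^{d^2 O(n)}$, so each admits an index of $O(n d^2)$ bits (polynomial when $d=d(n)$ is polynomial), and part~2 lets you recover bits of the indexed polynomial in polynomial space. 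Your own remark that the details are ``supplied by the level-by-level construction underlying Proposition~\ref{prop_slice}'' points in the right direction, but the intermediate justification you give directly contradicts the exponential-size regime the lemma lives in and should be replaced by the index-based bound above.
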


\begin{proof}
  We follow the circuit $C_n$ level by level. For test gates, we compute the
  polynomial $f$ to be tested. Then we enumerate the polynomials
  $f_1,\dots,f_s$ as in Proposition~\ref{prop_slice} for the circuit $C_n$ and
  we find the index $j$ of $f$ in this list. By consulting the $j$-th bit of
  the $i$-th satisfiable sign condition with respect to $f_1,\dots,f_s$ (which
  is done by the polynomial-space algorithm of Theorem~\ref{th_sc}), we
  therefore know the result of the test and can go on like this until the
  output gate.\qed
\end{proof}

\begin{theorem}
  Let $A\in\parczero$. There exists a polynomial-time algorithm with
  \unif\vpspacezero\ tests that decides $A$.
\end{theorem}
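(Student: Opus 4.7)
Let $(C_n)$ be the \pspace-uniform family of constant-free polynomial-depth algebraic circuits recognizing $A$. The plan is to split the decision on input $\bar x$ into two stages: first, determine the rank $i$ of the sign condition $S^{(i)}$ of $\bar x$ with respect to the polynomials $f_1,\dots,f_s$ of Proposition~\ref{prop_slice}; second, decide whether points with sign condition $S^{(i)}$ are accepted by $C_n$. Two points sharing the same sign condition produce identical test results throughout the computation of $C_n$, so acceptance of $\bar x$ depends only on $i$, and chaining the two stages settles membership in $A$.

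Stage one is handled directly by Proposition~\ref{prop_rank}, which already provides a polynomial-time algorithm with $\unif\vpspacezero$ tests producing the binary representation of $i$. Since $N = 2^{O(n^2 d(n)^2)}$, the rank has polynomial bit length $m$, and the $m$ bits are read off as outputs of test gates of the main circuit.

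Stage two is where the genuine difficulty lies: Lemma~\ref{lem_accept} supplies a polynomial-space \emph{boolean} procedure $h_n\colon\{0,1\}^m \to \{0,1\}$ that decides whether $S^{(i)}$ is accepted, whereas Definition~\ref{def_algo_tests} requires tests of the form ``does a $\unif\vpspacezero$ polynomial vanish?'' The bridge will be the multilinear extension
$$\tilde h_n(y_1,\dots,y_m) \;=\; \sum_{a \in \{0,1\}^m} h_n(a) \prod_{j : a_j = 1} y_j \prod_{j : a_j = 0} (1 - y_j),$$
which agrees with $h_n$ on the boolean cube. Its degree is $m$, its integer coefficients have bit size $O(m)$, and by the standard inclusion--exclusion formula each coefficient is a sum of at most $2^m$ values of $\pm h_n$ and is therefore computable in polynomial space. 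Hence $(\tilde h_n - 1)$ belongs to $\unif\vpspacezero$, and a single final test ``$\tilde h_n(i_1,\dots,i_m) - 1 = 0$?'' applied to the bits produced by stage one decides whether $C_n$ accepts $\bar x$. Composing the two stages yields the desired polynomial-time algorithm with $\unif\vpspacezero$ tests.
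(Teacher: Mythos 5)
Your proof is correct and follows the same two-stage route as the paper: Proposition~\ref{prop_rank} recovers the rank $i$ of the input's sign condition, and a final $\unif\vpspacezero$ test decides acceptance of $S^{(i)}$. The paper is terser for stage two (it simply invokes ``a last $\unif\vpspacezero$ test simulating the polynomial-space algorithm of Lemma~\ref{lem_accept} on input $i$''), and your multilinear extension $\tilde h_n$, together with the degree, coefficient-size, and \pspace-computability checks you verify, is exactly the explicit construction that realizes that step.
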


\begin{proof}
  $A$ is decided by a uniform family $(C_n)$ of constant-free polynomial-depth
  algebraic circuits. On input $\bar x$, thanks to Proposition~\ref{prop_rank}
  we first find the rank $i$ of the sign condition of $\bar x$ with respect to
  the polynomials $f_1,\dots,f_s$ of Proposition~\ref{prop_slice}. Then we
  conclude by a last \unif\vpspacezero\ test simulating the polynomial-space
  algorithm of Lemma~\ref{lem_accept} on input $i$.\qed
\end{proof}
Theorem~\ref{th_transfer} follows immediately from this result.
One could obtain other versions of these two results 
by changing the uniformity conditions or the role of constants.

\bibliographystyle{abbrv}
\bibliography{biblio}

\end{document}